\documentclass[aps,prl,twocolumn,superscriptaddress,noeprint,longbibliography, nofootinbib]{revtex4-2}

\usepackage[utf8]{inputenc}
\usepackage{amsmath}
\usepackage{amsthm}
\usepackage{enumitem}
\usepackage{amssymb}
\usepackage{wasysym}
\usepackage{oplotsymbl}
\usepackage{bm} 
\usepackage{dsfont} 
\usepackage{mathrsfs} 
\usepackage{physics}
\usepackage{mathtools}

\usepackage{xcolor}
\usepackage{framed}
\definecolor{darkblue}{rgb}{0,0,.65}
\definecolor{darkgreen}{rgb}{0.28,0.41,0.19}
\definecolor{nicegreen}{rgb}{0.28,0.85,0.19}

\usepackage[colorlinks=true,linkcolor=blue,allcolors=blue]{hyperref}
\usepackage[capitalize,nameinlink]{cleveref}

\def\equationautorefname~#1\null{Eq. (#1)\null}

\newcommand{\appref}[1]{\hyperref[#1]{App.~\ref*{#1}}}



\newcommand{\hilbert}{\ensuremath{\mathscr{H}}}


\newtheorem{prototheorem}{Theorem}[section]

\newtheorem{protolemma}[prototheorem]{Lemma}

\newenvironment{lemma}{\colorlet{shadecolor}{gray!15}\begin{shaded}\begin{protolemma}}
{\end{protolemma}\end{shaded}}

\newtheorem{thm}{Theorem}

\newtheorem{lem}{Lemma}

\newtheorem{Def}{Definition}

\newenvironment{boxthm}{\colorlet{shadecolor}{gray!15}\begin{shaded}\begin{thm}}
{\end{thm}\end{shaded}}


\begin{document}

\title{Bottlenecks in quantum channels and finite temperature phases of matter }

\author{Tibor Rakovszky}
\affiliation{Department of Physics, Stanford University, Stanford, California 94305, USA},
\affiliation{Department of Theoretical Physics, Institute of Physics,
Budapest University of Technology and Economics, M\H{u}egyetem rkp. 3., H-1111 Budapest, Hungary}
\affiliation{HUN-REN-BME Quantum Error Correcting Codes and Non-equilibrium Phases Research Group,
Budapest University of Technology and Economics,
M\H{u}egyetem rkp. 3., H-1111 Budapest, Hungary}
\author{Benedikt Placke}
\affiliation{Rudolf Peierls Centre for Theoretical Physics, University of Oxford, Oxford OX1 3PU, United Kingdom}
\author{Nikolas P.\ Breuckmann}
\affiliation{School of Mathematics, University of Bristol, Bristol BS8 1UG, United Kingdom}
\author{Vedika Khemani}
\affiliation{Department of Physics, Stanford University, Stanford, California 94305, USA}

\begin{abstract}
    We prove an analogue of the ``bottleneck theorem'', well-known for classical Markov chains, for Markovian quantum channels. In particular, we show that if two regions (subspaces) of Hilbert space are separated by a region that has very low weight in the channel's steady state, then states initialized on one side of this barrier will take a long time to relax, putting a lower bound on the mixing time in terms of an appropriately defined ``quantum bottleneck ratio''. Importantly, this bottleneck ratio involves not only the probabilities of the relevant subspaces, but also the size of off-diagonal matrix elements between them. For low temperature quantum many-body systems, we use the bottleneck theorem to bound the performance of any quasi-local Gibbs sampler. This leads to a new perspective on thermally stable quantum phases in terms of a decomposition of the Gibbs state into multiple components separated by bottlenecks. As a concrete application, we show rigorously that weakly perturbed commuting projector models with extensive energy barriers (including certain classical and quantum expander codes) have exponentially large mixing times. 
\end{abstract}

\maketitle


{\it Introduction.} Quantum channels are the most general kind of quantum evolutions~\cite{breuer2002theory,nielsen2010quantum,lidar2019lecture}. Physically, they describe the dynamics of systems coupled to equilibrium and non-equilibrium environments. Computationally, they form a paradigm for quantum algorithms, such as generalizations of the Monte Carlo method~\cite{yung2012quantum,layden2023quantum,wocjan2023szegedy}. An important context where these two perspectives intersect, is the problem of quantum Gibbs sampling~\cite{poulin2009sampling,temme2011quantum,rall2023thermal,zhang2023dissipative,chen2023quantum}: the design of quantum channels, with favorable properties (e.g., locality) that are guaranteed to approach the Gibbs state $\rho_G \propto e^{-H/T}$ for some chosen quantum Hamiltonian $H$ at temperature $T$. From a physical perspective, a Gibbs sampler models the behavior of the system coupled to a thermal bath; while algorithmically, on a quantum computer, it can play a role analogous to that of a classical Markov-Chain Monte-Carlo algorithm. Efficient local quantum Gibbs samplers, obeying a quantum version of the detailed balance condition~\cite{frigerio1977quantum,fagnola2007generators,alicki2010thermal}, have only been discovered very recently, and better understanding their properties is an important challenge~\cite{chen2311efficient,gilyen2024quantum,ding2024efficient,rouze2024optimal}. 

A fundamental property of Markov chains is their \emph{mixing time}~\cite{levin2017markov}, the minimal time it takes to get close to the steady state distribution starting from any initial state. The mixing time is thus closely related to the convergence of Monte Carlo algorithms and putting upper and lower bounds on it is of central importance. One standard approach that lower bounds the mixing time is the \emph{bottleneck method}~\cite{levin2017markov}: it involves finding a decomposition of the configuration space into two halves such that both have significant probabilities in the steady state, but the flow induced by the Markov chain between them is small; one can then show that a state initialized in one half takes a long time to relax to the true steady state.

In this work, we generalize the bottleneck approach from Markov chains to Markovian quantum channels. The usual proof of the bottleneck theorem~\cite{levin2017markov} involves expanding the transition probability between two regions as a sum over probabilities of paths. Thus, it is not a priori obvious whether a similar result applies to quantum systems where interference effects make probabilities non-additive and the system might tunnel through energy barriers quantum mechanically\footnote{One context where analogs of the bottleneck theorem have been proven is for unitary quantum walks~\cite{aharonov2001quantum,szegedy2004quantum}.}. Here, we show that an appropriate generalization of the bottleneck theorem \emph{does hold}: if one can find a subspace such that in order to leave it, any quantum trajectory must pass through a low-weight ``bottleneck'' region, then the subspace hosts a long-lived approximate steady state, lower bounding the channel's mixing time in terms of a ``quantum bottleneck ratio'' (\autoref{fig:fig1}). The important difference is that this lower bound involves not only probabilities but also the \emph{coherences} (off-diagonal matrix elements) present in the steady state density matrix.

To apply this result to quantum Gibbs samplers, we establish an appropriate bottleneck condition that lower bounds the mixing time of \emph{any} Gibbs sampler with $k$-local Kraus operators. We argue that this should lead to superpolynomially long mixing times in many cases of interest, such as the low temperature regime of certain Ising and toric code models. We further illustrate these ideas by proving exponential mixing times in a family of Hamiltonians related to good classical and quantum error correcting codes. 

Beyond a bound on mixing times, the bottleneck theorem provides a novel dynamical perspective on finite-temperature phases of matter. The bottleneck ratio formalizes the idea of a \emph{free energy barrier} separating the Gibbs state into multiple distinct components. Our theorem shows that when such barriers are present (as are expected in many non-trivial phases of matter at low temperatures), \emph{any} local dynamics that obeys detailed balance will feature multiple near-degenerate steady states. This provides an analogy with non-trivial zero temperature phases, which are often characterized in terms of ground state degeneracies, and places thermal phases within the broader framework of phases in open quantum systems~\cite{rakovszky2024defining}.

{\it Quantum channels and mixing times.} A quantum channel~\cite{nielsen2010quantum} $\mathcal{M}$ is a super-operator that maps any density matrix $\rho$ over some Hilbert space $\hilbert$ to another density matrix $\mathcal{M}[\rho]$. In the following we will make use of the Kraus representation of channels, $\mathcal{M}[\rho] = \sum_i K_i \rho K_i^\dagger$ where $\{K_i\}$ are a set of operators with the property $\sum_i K_i^\dagger K_i = 1\!\!1$. We will be interested in Markovian (memoryless) evolutions, where the state at time $t$ is obtained by repeated application of the same channel, $\rho(t) = \mathcal{M}^t[\rho(0)]$ or, more generally, by a sequence of different channels, $\rho(t) = (\mathcal{M}_t \ldots \mathcal{M}_1)[\rho(0)]$. 

One useful representation of channels is in terms of \emph{quantum trajectories}~\cite{carmichael2009open,molmer1993monte}. These are an ensemble of time-evolving \emph{pure} quantum states, where the Kraus operators correspond to random ``quantum jumps''. The probabilities associated to the different trajectories are chosen such that on average, they reproduce the evolution of $\rho(t)$ under the channel $\mathcal{M}$. 

Every channel has at least one steady state ${\mathcal{M}[\rho_{\rm ss}] = \rho_{\rm ss}}$~\cite{watrous2018theory}.
The steady state is unique, unless the Hilbert space can be decomposed into multiple orthogonal subspaces that are all individually invariant under $\mathcal{M}$~\cite{burgarth2013ergodic}. 
We will be interested in the \emph{mixing time} of $\mathcal{M}$, defined as the minimal time needed to reach $\rho_{\rm ss}$:
\begin{equation}
    t_{\text{mix},\varepsilon} = \min\left\{t\geq 0: ||\rho(t) - \rho_{\rm ss}||_1 \leq \varepsilon, \, \forall \rho(0)\right\},
\end{equation}
where $||\cdot||_1$ is the trace norm and the cutoff $\varepsilon$ is arbitrary and often taken to be $\varepsilon = 1/4$. When $\mathcal{M}$ is engineered to approach a particular steady state, as in Gibbs samplers or other Monte-Carlo-style algorithms, $t_{\rm mix}$ characterizes the time needed for the algorithm to converge. Next, we show how to lower bound $t_{\rm mix}$ entirely in terms of the steady state itself.  

\begin{figure}
    \centering
    \includegraphics[width=0.75\columnwidth]{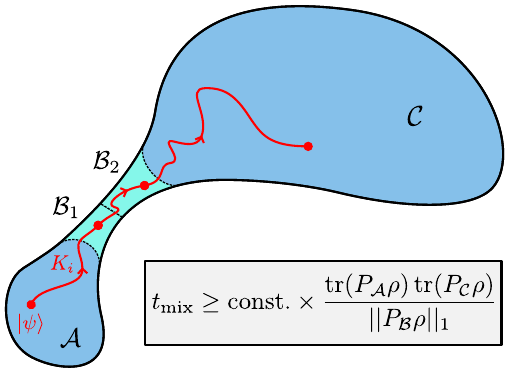}
    \caption{{\bf General bottleneck theorem.} The Hilbert space is split into four parts in such a way that Kraus operators only have matrix elements between neighboring subspaces. The probability of the ``bottleneck region'' $\mathcal{B}$ can be used to upper bound the time it takes to cross from $\mathcal{A}$ to $\mathcal{C}$.
    }
    \label{fig:fig1}
\end{figure}

{\it Quantum bottleneck theorem.} We will now formulate a quantum bottleneck theorem. Colloquially, it says the following (see \autoref{fig:fig1}). Imagine that the Hilbert space can be split into three regions $\mathcal{A},\mathcal{B},\mathcal{C}$ in such a way that any quantum trajectory from $\mathcal{A}$ to $\mathcal{C}$ has to pass through~$\mathcal{B}$. Then, if the steady state $\rho_{\rm ss}$ has small weight  on $\mathcal{B}$ compared to $\mathcal{A}$, then we can construct a density matrix supported entirely on $\mathcal{A}$ that will remain localized in $\mathcal{A}$ for long times. For those familiar with classical Markov chain bottlenecks, some details of our formulation might seem unusual; for this reason, we discuss the classical analogue in the supplement~\cite{SM_bottleneck}. In particular, in order to make the proof work, we need to split the bottleneck region $\mathcal{B}$ into two halves. 

\begin{boxthm}[Quantum bottleneck theorem]\label{thm:bottleneck_general}
    Let $\mathcal{M}$ be a quantum channel with steady state $\rho$ and let $\hilbert = \mathcal{A} \oplus \mathcal{B}_1 \oplus \mathcal{B}_2\oplus\mathcal{C}$ be a decomposition of the Hilbert space into orthogonal subspaces with projectors $P_\mathcal{A},P_{\mathcal{B}_1},P_{\mathcal{B}_2},P_{\mathcal{C}}$ and $P_{\mathcal{B}_1}+P_{\mathcal{B}_2} \equiv P_\mathcal{B}$. Assume that $\mathcal{M}$ has a Kraus representation such that all Kraus operators $K_i$ satisfy 
    \begin{align}\label{eq:bottleneck_condition_general}
        (P_\mathcal{C} + P_{\mathcal{B}_2}) K_i P_\mathcal{A} = (P_\mathcal{A}+P_{\mathcal{B}_1}) K_i P_\mathcal{C} = 0.
    \end{align}
    Then the projected state, $\rho_\mathcal{A} \equiv P_\mathcal{A} \rho P_\mathcal{A} / \tr(P_\mathcal{A}\rho)$ satisfies
    \begin{equation}\label{eq:bottleneck_ratio_general}
        ||\mathcal{M}[\rho_\mathcal{A}] - \rho_\mathcal{A}||_1 \leq 10 \frac{||P_\mathcal{B}\rho||_1}{\tr(P_\mathcal{A}\rho)} \equiv 10 \Delta.
    \end{equation}
\end{boxthm}

\emph{Proof idea.} One first rewrites the LHS of \cref{eq:bottleneck_ratio_general} in terms of projections of $\rho$ into the ``complement'' $\mathcal{BC} \equiv \mathcal{B}\oplus\mathcal{C}$. What happens to the state in $\mathcal{BC}$ upon application of $\mathcal{M}$? The part contained in $\mathcal{B}_2\mathcal{C}$ cannot change, since it is far from the boundary between $\mathcal{A}$ and $\mathcal{B}$ and thus insensitive to the projection of $\rho$ onto $\mathcal{BC}$. The part in $\mathcal{B}_1$ can change; however, this change is bounded by the fact that it can only have weight flowing into it from $\mathcal{B}_2$ which already has a small weight to begin with~\cite{SM_bottleneck}.

Note that while the denominator of the bottleneck ratio $\Delta$ in \cref{eq:bottleneck_ratio_general} is simply the probability associated to the subspace $\mathcal{A}$, the numerator does not have a similar interpretation because the projector $P_\mathcal{B}$ appears only on one side of $\rho$. This means that if we decompose $\rho$ into blocks according to $\mathcal{A,B,C}$, we require not only that the diagonal block in $\mathcal{B}$ is small, but also that the off-diagonal blocks between $\mathcal{B}$ and either $\mathcal{A}$ or $\mathcal{C}$ are small also (relative to the probability of $\mathcal{A}$ itself). However, off-diagonal and diagonal matrix elements are related by the fact that $\rho$ has to be a positive semi-definite matrix. Using this, we can upper bound the numerator solely in terms of probabilities using the following~\cite{SM_bottleneck} 
\begin{lem}\label{lem:diag_bound}
    $||P_\mathcal{B} \rho||_1 \leq \sqrt{\tr(P_\mathcal{B}\rho)}$.
\end{lem}
Note that this bound might be very loose. For example, if $[P_\mathcal{B},\rho] = 0$ then $||P_\mathcal{B}\rho||_1 = \tr(P_\mathcal{B}\rho)$, without the square root. This happens e.g. in classical systems. 

The bottleneck theorem leads directly to a bound on the mixing time. In particular, let $\{\mathcal{M}_\tau\}$ be a collection of channels that all share the same steady state $\rho$ and all satisfy \cref{eq:bottleneck_condition_general} with respect to the same decomposition into $\mathcal{A},\mathcal{B}_1,\mathcal{B}_2,\mathcal{C}$. A simple argument~\cite{SM_bottleneck} shows that
\begin{align}
    ||\mathcal{M}_t \ldots\mathcal{M}_1[\rho_\mathcal{A}] - \rho_\mathcal{A}||_1 &\leq t \min_{\tau} ||\mathcal{M}_\tau[\rho_\mathcal{A}]-\rho_\mathcal{A}||_1 \nonumber \\ &\leq 10 t \Delta,
\end{align}
which in turn implies a lower bound on the mixing time:
{\colorlet{shadecolor}{gray!15}
\begin{shaded}
\begin{lem}\label{lem:tmix_bound}
    $t_{\text{mix},\varepsilon} \geq \frac{1 - \tr(P_\mathcal{A}\rho)}{5 \Delta} - \varepsilon \geq \frac{\tr(P_\mathcal{A}\rho) \tr(P_\mathcal{C}\rho)}{5||P_{\mathcal{B}}\rho||_1} - \varepsilon$.
\end{lem}
\end{shaded}}

An important aspect of \cref{thm:bottleneck_general} and the resulting bound in \cref{lem:tmix_bound} is that it depends only on the properties of the steady state $\rho$. The properties of the channels $\mathcal{M}_\tau$ appeared only through condition~\eqref{eq:bottleneck_condition_general}. In the following, we will consider a situation where this condition follows simply from $k$-locality and thus applies to a large family of channels. 

{\it Bottlenecks from locality.} We now consider systems composed of $n$ degrees of freedom, which we take to be qubits (generalizing to qudits is straightforward), establishing a decomposition of the $2^n$-dimensional Hilbert space $\hilbert$ into an $n$-fold tensor product. We will be interested in channels that exhibit a degree of locality with respect to this decomposition. In particular, let us call an operator $r$-local if it can be written as a linear combination of operators acting on at most $r$ qubits. For channels, we say that $\mathcal{M}$ is $r$-local if it has a Kraus representation where all Kraus operators are $r$-local.

Locality also allows us to define a notion of proximity of a state in $\hilbert$ to another state, or to some subspace of $\hilbert$. In particular, we define the neighborhood and the boundary of a subspace as follows (see also \autoref{fig:fig2}):
\begin{Def}
    Given a subspace $\mathcal{V} \subset \hilbert$ we define its $r$-\textbf{neighborhood} as
    \begin{equation}
        \mathcal{B}_r(\mathcal{V}) := \text{Span}\{\hat{O} \ket{\psi}| \ket{\psi} \in \mathcal{V}, \hat{O} \text{ is } r\text{-local}\},
    \end{equation}
    and its $r$-\textbf{boundary} $\partial_r \mathcal{V}$ as the orthogonal complement of $\mathcal{V}$ within $\mathcal{B}_r(\mathcal{V})$, satisfying $\mathcal{B}_r(\mathcal{V}) = \mathcal{V} \oplus \partial_r\mathcal{V}$.
\end{Def}

The reason behind these definitions is that for \emph{any} $r$-local channel, a quantum trajectory initialized in $\mathcal{V}$ must pass through the boundary $\partial_r \mathcal{V}$ before it can get to the rest of the Hilbert space. This means that we can apply \cref{thm:bottleneck_general} to this situation to obtain a general bound, applicable to any local channel~\cite{SM_bottleneck} :

\begin{boxthm}[Quantum Bottleneck Theorem, local channels]\label{thm:bottleneck_local}
    Let $\rho$ be a density matrix, $\mathcal{V}\subset\hilbert$ a subspace with orthogonal projector $P_\mathcal{V}$, and $\rho_\mathcal{V} = P_\mathcal{V}\rho P_\mathcal{V} / \tr(P_\mathcal{V} \rho)$. We define the bottleneck ratio as
    \begin{equation}\label{eq:bottleneck_ratio_local}
        \Delta(r) := \frac{||P_{\partial_{2r}\mathcal{V}}\rho||_1}{\tr(P_{\mathcal{V}}\rho)}.
    \end{equation}
    Let $\mathcal{M}$ be a quantum channel satisfying $\mathcal{M}[\rho] = \rho$ and assume that $\mathcal{M}$ is $s$-local with $s \leq r$. Then
\begin{equation}\label{eq:bottleneck_thm_local}
        ||\mathcal{M}[\rho_\mathcal{V}] - \rho_\mathcal{V}||_1 \leq 10 \Delta(r).
    \end{equation}
\end{boxthm}

The implications of this theorem become more transparent when we consider a family or problems for different system sizes $n$ with some well-defined thermodynamic limit $n\to\infty$. Assume that we can find a family of subspaces $\mathcal{V}(n)$ and a diverging function $r(n)$ such that the bottleneck ratio $\Delta(n) \equiv \Delta(r(n))$ defined in \cref{eq:bottleneck_ratio_local} goes to zero superlogarithmically in the thermodynamic limit. Then \cref{eq:bottleneck_thm_local} implies that for any $s$-local channel $\mathcal{M}$ with finite $s$ (or, more generally, $s = o(r(n))$), the projected state $\rho_\mathcal{V}$ becomes an increasingly good approximate steady state of $\mathcal{M}$. In particular, assuming that $\lim_{n\to\infty}\tr(P_{\mathcal{V}(n)}\rho_n) < 1$, the mixing time is lower bounded to be $t_\text{mix} = \Omega(1/\Delta(n))$. 

\cref{thm:bottleneck_local} generalizes to \emph{quasi-local} channels, i.e.\ those that are well-approximated by local ones. In particular, we can approximate $\mathcal{M}$ with one that is $r(n)$-local. As long as the approximation error decays faster than $\Delta(n)$, $1/\Delta(n)$ still lower bounds the mixing time~\cite{SM_bottleneck}.

\begin{figure}
    \centering
    \includegraphics[width=0.9\columnwidth]{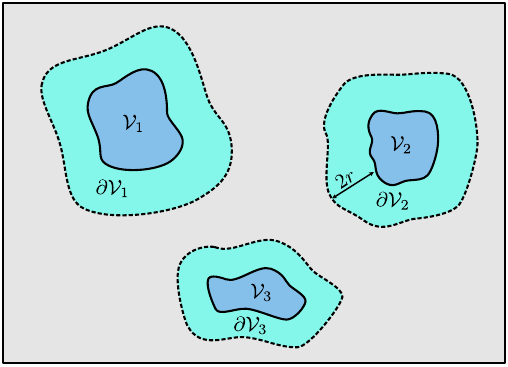}
    \caption{{\bf Local channels and Gibbs state structure.} For local channels, we define bottleneck regions in terms of states that can be reached by local operations. In non-trivial finite temperature phases, the Gibbs state $\rho_G$ fragments into multiple components surrounded by large bottlenecks.
    }
    \label{fig:fig2}
\end{figure}

{\it Gibbs sampling and low-temperature phases.}
The bottleneck theorem has important implications for the the problem of \emph{Gibbs sampling}, i.e. the case where the steady state is $\rho = \rho_{G} \propto e^{-\beta H}$ for some (usually local) Hamiltonian $H$. The construction of general (quasi-)local Gibbs samplers is a very recent development~\cite{chen2311efficient,gilyen2024quantum}. It has been shown that for local $H$ at sufficiently small $\beta$ they achieve \emph{rapid mixing}, $t_{\rm mix} = O(\log{n})$~\cite{rouze2024optimal}. Our results complement this by establishing sufficient conditions for \emph{slow mixing}. In particular, we expect that \cref{thm:bottleneck_local} implies (super)polynomial mixing times at low temperatures in many cases of interest for reasons we now explain. 

For Gibbs states, the bottleneck ratio is related to the notion of a \emph{free energy barrier}, between $\mathcal{V}$ and the rest of Hilbert space. Let us define the free energy of subspace $\mathcal{V}$ as $F(\mathcal{V}) = -\tr(e^{-\beta H}) / \beta$. We can relate the bottleneck ratio~\cref{eq:bottleneck_ratio_local} to the free energy in various cases\footnote{In the following, we omit the $r$-dependence for simplicity.} :
\begin{itemize}
    \item[(a)] If $\mathcal{V}$ has finite probability in the thermodynamic limit, $\tr(P_\mathcal{V}\rho_G) > c$, then $\Delta \leq e^{-\beta F(\partial\mathcal{V})/2} / c$.
    \item[(b)] If $[\rho_G,P_{\partial\mathcal{V}}] = 0$, then $\Delta \leq e^{-\beta(F(\partial\mathcal{V})-F(\mathcal{V}))}$.
    \item[(c)] In the most general case, from \cref{lem:diag_bound}, 
    \begin{equation}\label{eq:thermo_bound}
        \Delta \leq e^{-\frac{\beta}{2}((F(\partial\mathcal{V})-F(\mathcal{V})) + (F - E_{\text{min}}(\mathcal{V})))},
    \end{equation}
    where $F = F(\hilbert)$ is the free energy of the Gibbs state and $E_{\text{min}}(\mathcal{V}) \equiv \min_{\ket{\psi}\in\mathcal{V}}(\langle \psi | H | \psi \rangle)$ is the minimal energy in the subspace $\mathcal{V}$. As long as this minimal energy is smaller than $F$, this reduces to the previous bound with $\beta\to\beta/2$.
\end{itemize}
The upshot is that in many cases, either $F(\partial\mathcal{V})$ or $F(\partial\mathcal{V}) - F(\mathcal{V})$ diverging with $n$ (a ``macroscopic free energy barrier'') implies $\Delta(n)\to 0$. We will now discuss some cases where we expect this to happen.

One example is \emph{spontaneous symmetry breaking}~\cite{beekman2019introduction}. This has been established rigorously e.g.\ in the classical Ising model~\cite{levin2017markov,thomas1989bound} and is expected to hold for other Hamiltonians in the same phase, including non-commuting ones, and for other cases of discrete symmetry breaking. Analogous physical arguments suggest the existence of macroscopic free-energy barriers in the four dimensional toric code~\cite{dennis2002topological,hastings2011topological,alicki2010thermal} (which spontaneously breaks so-called \emph{higher form} symmetries~\cite{mcgreevy2023generalized}), and its relatives, although no proof exists to the best of our knowledge. These examples fit under case (a) above; a weaker version (with $c = 1/O(\text{poly}(n))$ rather than $O(1)$) should apply to continuous symmetry breaking as well. Another context, fitting into case (b), where macroscopic barriers have been proven are \emph{classical spin glasses}~\cite{mezard2009information,arous2024shattering,alaoui2023shattering,gamarnik2023shattering}. One again expects these results to generalize to related quantum models and below we use \cref{eq:thermo_bound} to establish this for a particular family of models~\cite{SM_bottleneck} (see also Ref. \cite{LDPC_glass}).

This suggests a unifying picture of many non-trivial finite temperature phases of matter (see \autoref{fig:fig2}). In these phases, the Gibbs state fragments into multiple components separated by bottlenecks. Different phases are distinguished by the number and nature of these Gibbs state components. This allows to treat many different phases within the same unified framework and also connects it to the study of phases in open quantum systems~\cite{rakovszky2024defining}, by suggesting that equilibrium phases can be understood in terms of the near-degenerate steady states that mush be present for any local dynamics obeying detailed balance. 

{\it Bottlenecks from extensive energy barriers.} Here, we discuss a concrete examples where \cref{thm:bottleneck_local} leads to exponential mixing times, $t_{\rm mix} = e^{\Omega(n)}$. 

As we discussed, the bottleneck ratio simplifies when $\rho_G$ and $P_{\partial\mathcal{V}}$ commute. A situation where this applies is the case of Hamiltonians $H_0$, which can be written as sums of commuting projectors. One often imposes a ``low density'' condition, meaning that each term acts on finitely many qubits and each qubit appears in finitely many terms (with ``finitely'' meaning $o(1)$ in the limit $n\to\infty$). While this condition implies a degree of locality, it nevertheless allows for $H_0$ that are not geometrically local. An interesting family of such stabilizer models are those exhibiting \emph{extensive energy barriers}~\cite{sipser_spielman1996,dinur2023qldpc,hong2024quantum}:
\begin{equation}\label{eq:extensive_energy}
E_{\text{min}}(\partial\mathcal{V}) - E_{\text{min}}(\mathcal{V}) \geq \kappa n,
\end{equation}
for some constant $\kappa > 0$. It is easy to see that this also leads to a bottleneck ratio vanishing as $\Delta(n) = e^{-\Omega(n)}$. In Ref. \cite{LDPC_glass}, it is shown that in some models, there exist exponentially many in $n$ subspaces $\mathcal{V}$ satisfying \cref{eq:extensive_energy}, including ones with $E_{\text{min}}(\mathcal{V}) = O(n)$, which is interpreted in terms of spin glass order~\cite{mezard1987spin,mezard2009information} in these models. 

A natural question to ask is about the robustness of this spin glass order to perturbations of the Hamiltonian $H = H_0 +V$ where $V$ is a sum of $r$-local terms. Here we use \cref{thm:bottleneck_local} to establish stability, proving~\cite{SM_bottleneck}

\begin{boxthm}[Stability of spin glass order]\label{thm:stability}
    Let $\mathcal{V}$ be a subspace satisfying \cref{eq:extensive_energy} for a commuting projector Hamiltonian $H_0$, with $E_{\text{min}}(\mathcal{V}) \leq \kappa n /2$. Consider $H = H_0 + V$ with $||V|| \leq g n$. For $\beta > \beta_*$ and $g < g_*(\beta)$, the bottleneck ratio of $\Delta(\mathcal{V})$ in the state $\rho_G \propto e^{-\beta H}$ remains exponentially small in system size, $\Delta(\mathcal{V}) = e^{-\Omega(n)}$.
\end{boxthm}

The key step in proving \cref{thm:stability}, beyond applying \cref{eq:extensive_energy}, is to show that eigenstates of $H$ with energies close to $E_{\rm min}(\mathcal{V})$ have exponentially small amplitudes in $\partial\mathcal{V}$. We establish this using recent results from Ref. \cite{yin2024eigenstate}.

{\it Conclusions.} In this paper, we proved a generalization of the bottleneck theorem of Markov chains for quantum channels, both in a general context and for the specific case of channels with local Kraus operators. We discussed the application of this result to Gibbs states of local Hamiltonians, and argued that in many physically relevant cases, it yields superpolynomial mixing times for any local Gibbs sampler, which also provides an interesting dynamical perspective on equilibrium phases of quantum matter. We further demonstrated the validity of this approach by using the bottleneck theorem to establish the stability of spin glass order in certain stabilizer models against small perturbations. 

Our work opens up a number of interesting open questions. The most obvious one is to to apply the bottleneck theorem to substantiate our discussion on finite temperature phases by proving large mixing times in models of interest, such as the four dimensional toric code. Another is to further explore the implications of this dynamical perspective on thermal phases, e.g.\ by applying the ideas developed in Ref.\ \cite{rakovszky2024defining} to Gibbs samplers. Finally, we expect our bottleneck result to have applications for other Monte Carlo-style quantum algorithms beyond Gibbs sampling. 

{\it Note added.} During the completion of this manuscript, a related work appeared, also developing quantum versions of the bottleneck theorem~\cite{gamarnik2024slow}. While their original version applies only to the case when the decomposition of the Hilbert space commutes with the steady state, and is thus less broadly applicable than ours, the updated version, appearing in the same arXiv posting as the present manuscript, involves a bottleneck theorem that should apply to similar cases as ours. The definition of the quantum bottleneck ratio is nevertheless different in the two cases and a more detailed comparison is an interesting problem for future work. 

{\it Acknowledgements.} 
We thank Grace Sommers for collaboration on related work and we thank Ethan Lake, Andr\'as Gily\'en and especially Curt von Keyserlingk for useful discussions.
T.R. was supported in part by Stanford Q-FARM Bloch Postdoctoral Fellowship in Quantum Science and Engineering and by the HUN-REN Welcome Home and Foreign Researcher Recruitment Programme 2023.
B.P. acknowledges funding through a Leverhulme-Peierls Fellowship at the University of Oxford and the Alexander von Humboldt foundation through a Feodor-Lynen fellowship.
This work was done in part while NPB was visiting the Simons Institute for the Theory of Computing, supported by DOE QSA grant \#FP00010905.
V.K. acknowledges support from the Packard Foundation through a Packard
Fellowship in Science and Engineering and the US Department of Energy, Office of Science,
Basic Energy Sciences, under Early Career Award Nos. DE-SC0021111.

\bibliography{references.bib}

\end{document}


\title{Supplementary Material for:\\
Bottlenecks in quantum channels and defining finite temperature phases of matter
}

\maketitle

\section{A classical bottleneck theorem}

Here we present a classical analog of our Theorem 1, applicable to classical Markov chains. While bottleneck theorems in this case are well known~\cite{levin2017markov}, our formulation differs from the standard one so we include it here to showcase the logic of the more general quantum proof in a simpler setting. 

Let $\Lambda$ denote the state space. Probability distributions $\ket{\pi}$ over $\Lambda$ correspond to $|\Lambda|$-dimensional non-negative vectors with $||\ket{\pi}||_1 = \langle \Lambda |\pi\rangle = 1$, where we use $\ket{\Lambda}$ to denote the all-$1$ vector and $||\cdot||_1$ is the vector 1-norm (sum of absolute values of all elements). We will use $P_\Omega$ to denote the projector onto configurations in the set $\Omega \subseteq \Lambda$. The total probability weight associated to this set is $\pi(\Omega) = \langle \Lambda | P_\Omega |\pi\rangle \equiv \langle \Omega|\pi\rangle$ where $\ket{\Omega} \equiv P_\Omega \ket{\Lambda}$. 

We are now ready to state the analog of the bottleneck theorem.
\begin{theorem}[Classical Bottleneck Theorem]\label{thm:classical_bottleneck_general}
    Let $M$ be a Markov generator (stochastic matrix) on $\Lambda$, with steady state $\ket{\pi}$, and let $\Lambda = A \uplus B_1 \uplus B_2 \uplus C$ be a partition of the state space into a union of four disjoint sets, such that the following condition is satisfied:
    \begin{equation}\label{eq:classical_cond}
    P_{B_2\uplus C}M P_A = P_{A\uplus B_1} M P_C = 0    
    \end{equation}
    Define the projected state $\ket{\pi_A} \equiv P_A \ket{\pi} / \pi(A)$. Then
    \begin{equation}
        ||M\ket{\pi_A} - \ket{\pi_A}||_1 \leq 2\frac{ \pi(B)}{\pi(A)},
    \end{equation}
    where $B \equiv B_1 \uplus B_2$.
\end{theorem}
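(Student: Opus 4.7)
The plan is to use stationarity $M\ket{\pi}=\ket{\pi}$ together with the bottleneck conditions in \autoref{eq:classical_cond} to express $M\ket{\pi_A}-\ket{\pi_A}$ as a difference of two non-negative vectors supported on disjoint subsets of $\Lambda$, so the $1$-norm splits as a sum of two probability flows that each admit a simple bound.

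First I would decompose $\ket{\pi}=P_A\ket{\pi}+P_{B_1}\ket{\pi}+P_{B_2}\ket{\pi}+P_C\ket{\pi}$ and apply $P_A M$ to the identity $M\ket{\pi}=\ket{\pi}$. The condition $P_AMP_C=0$ (which is part of $P_{A\uplus B_1}MP_C=0$) kills the $C$ contribution, yielding
\begin{equation}
    P_AMP_A\ket{\pi}=P_A\ket{\pi}-P_A M(P_{B_1}+P_{B_2})\ket{\pi}.
\end{equation}
Next, using the other condition $P_{B_2\uplus C}MP_A=0$, I would write $MP_A\ket{\pi}=P_AMP_A\ket{\pi}+P_{B_1}MP_A\ket{\pi}$, and substitute the above to obtain
\begin{equation}
    M\ket{\pi_A}-\ket{\pi_A}=\frac{1}{\pi(A)}\bigl[P_{B_1}MP_A\ket{\pi}-P_AM(P_{B_1}+P_{B_2})\ket{\pi}\bigr].
\end{equation}

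The two terms on the right are non-negative vectors supported on the disjoint sets $B_1$ and $A$, respectively, so the $1$-norm is the sum of their individual $1$-norms. Contracting against $\bra{\Lambda}$ and using $\bra{\Lambda}M=\bra{\Lambda}$ (stochasticity of $M$) shows that both flows are equal to the common value $F\equiv\pi(A)-\langle A|MP_A|\pi\rangle$, i.e.\ the net probability current across the bottleneck. Therefore $\|M\ket{\pi_A}-\ket{\pi_A}\|_1=2F/\pi(A)$.

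The last step is the bound $F\leq\pi(B)$. I would use $F=\langle B_1|MP_A|\pi\rangle$ and observe that since $M$ and $P_A\ket{\pi}$ have non-negative entries and $P_A\ket{\pi}\leq\ket{\pi}$ componentwise, one has $MP_A\ket{\pi}\leq M\ket{\pi}=\ket{\pi}$ componentwise; contracting against $\bra{B_1}$ gives $F\leq\pi(B_1)\leq\pi(B)$. None of the steps are truly hard, but the main conceptual obstacle is recognizing that the two bottleneck conditions are precisely what is needed to make $M\ket{\pi_A}-\ket{\pi_A}$ split into two pieces on disjoint supports, so that the $1$-norm (rather than merely a triangle-inequality upper bound) is computed exactly as twice the current.
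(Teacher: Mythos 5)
Your proof is correct, and it takes a genuinely different route from the paper. The paper rewrites $MP_A\ket{\pi}-P_A\ket{\pi}$ as $-(MP_{BC}\ket{\pi}-P_{BC}\ket{\pi})$ using $P_A=\id-P_{BC}$, simplifies using the bottleneck conditions to $P_{B_1}\ket{\pi}-P_{AB_1}MP_B\ket{\pi}$, and then applies the triangle inequality and the norm-contractivity of $M$ and of projections to bound each piece by $||P_B\ket{\pi}||_1=\pi(B)$. You instead apply the two conditions directly to $MP_A\ket{\pi}$, which produces the exact decomposition
\begin{equation}
M\ket{\pi_A}-\ket{\pi_A}=\frac{1}{\pi(A)}\left(P_{B_1}MP_A\ket{\pi}-P_AMP_B\ket{\pi}\right),
\end{equation}
a difference of two non-negative vectors supported on the disjoint sets $B_1$ and $A$. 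That disjointness is the key observation: it lets you compute the $1$-norm \emph{exactly} as the sum of the two fluxes (the triangle inequality becomes an equality), and conservation of probability ($\bra{\Lambda}M=\bra{\Lambda}$) forces the two fluxes to coincide, giving $||M\ket{\pi_A}-\ket{\pi_A}||_1=2F/\pi(A)$ with $F=\langle B_1|MP_A|\pi\rangle$. This is a cleaner, more ``physical'' derivation (it identifies $F$ as the net current through the bottleneck) and it is strictly tighter than the paper's: bounding $F\leq\pi(B_1)$ via $MP_A\ket{\pi}\leq M\ket{\pi}=\ket{\pi}$ componentwise actually yields $||M\ket{\pi_A}-\ket{\pi_A}||_1\leq 2\pi(B_1)/\pi(A)$, which subsumes the stated $2\pi(B)/\pi(A)$. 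The trade-off is that your argument leans on positivity (non-negativity of $M$ and of $P_A\ket{\pi}$), which has no analogue in the quantum case, whereas the paper's projector-algebra-plus-contractivity strategy is precisely the template that generalizes to Theorem~\ref{thm:quantum_bottleneck_general}.
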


\begin{proof}
    First, let us introduce some notation. We will denote by $P_{AB_1} = P_{A\uplus B_1} = P_A + P_{B_1}$ and similarly for other combinations of the four subsets. We thus have $P_{AB_1B_2C} = P_{ABC} = \id$. 
        
    The conditions on $M$ mean that in order to go from a configuration in $A$ to a configuration in $C$, the system has to first go through $B_1$ and then $B_2$, in that order, so that the sets are naturally arranged in a sequence (see Fig. 1 of the main text). 

    The claim of the theorem is equivalent to 
    \begin{equation}
        ||MP_{A} \ket{\pi} - P_{A}\ket{\pi}||_1 \leq 2  || P_{B} | \pi \rangle||_1.
    \end{equation}
    To show this, we first note that 
    \begin{equation}
        M P_A \ket{\pi} - P_A \ket{\pi} = M(1\!\!1-P_{BC}) \ket{\pi} - P_A \ket{\pi} = \ket{\pi} - P_A \ket{\pi} - M P_{BC}\ket{\pi} = -(MP_{BC}\ket{\pi} - P_{BC}\ket{\pi}).
    \end{equation}
    We now multiply $MP_{BC}\ket{\pi}$ by $\id = P_{AB_1} + P_{B_2C}$ from  the left and note that, due to the conditions Eq.~\eqref{eq:classical_cond}, the following two equalities hold: 
    \begin{align} 
        P_{B_2C} M P_{BC} \ket{\pi} &= P_{B_2C} M \ket{\pi} = P_{B_2C} \ket{\pi}, \nonumber \\
        P_{AB_1} M P_{BC} \ket{\pi} &= P_{AB_1} M P_{B} \ket{\pi}.
    \end{align}
    Importantly, the first term cancels part of $P_{BC}\ket{\pi} = (P_{B_1} + P_{B_2C})\ket{\pi}$. Combining these, we thus get 
        \begin{equation}
        ||MP_{BC} \ket{\pi} - P_{BC}\ket{\pi}||_1 = || P_{AB_1} M P_{B} \ket{\pi} - P_{B_1}\ket{\pi}||_1 \leq ||P_{AB_1} M P_{B} \ket{\pi}||_1 + ||P_{B}\ket{\pi}||_1 \leq 2|| P_{B} \ket{\pi} ||_1,
    \end{equation}
    where we used that both $M$ and multiplying py projections cannot increase the $1$-norm.
\end{proof}

\section{Quantum bottleneck theorems}

We consider quantum channels $\mathcal{M}$ acting on a quantum system with Hilbert space $\hilbert$. We denote by $\mathcal{V} \subseteq \mathcal{H}$ a linear subspace of $\hilbert$ and by $P_\mathcal{V}$ the orthogonal projector onto $\mathcal{V}$. 

\subsection{General theorem}

Let us now prove the quantum bottleneck theorem stated in the main text.

\begin{theorem}[Quantum Bottleneck Theorem, general case]\label{thm:quantum_bottleneck_general}
    Let $\mathcal{M}$ be a quantum channel with steady state $\rho$ and. Let $\hilbert = \mathcal{A} \oplus \mathcal{B}_1 \oplus \mathcal{B}_2 \oplus \mathcal{C}$ be a decomposition of the Hilbert space into four orthogonal subspaces and let $P_\mathcal{A},P_{\mathcal{B}_1},P_{\mathcal{B}_2},P_{\mathcal{C}}$ denote the corresponding projectors. Assume that $\mathcal{M}$ has a Kraus representation such that all Kraus operators $K_i$ satisfy 
    \begin{align}\label{eq:locality_condition}
        (P_{\mathcal{B}_2} + P_\mathcal{C}) K_i P_\mathcal{A} = (P_\mathcal{A}+P_{\mathcal{B}_1}) K_i P_\mathcal{C} = 0.
    \end{align}
    Then the projected state, $\rho_A \equiv P_\mathcal{A} \rho P_\mathcal{A} / \tr(P_\mathcal{A}\rho)$ satisfies
    \begin{equation}
        ||\mathcal{M}[\rho_\mathcal{A}] - \rho_\mathcal{A}||_1 \leq 10 \frac{||P_{\mathcal{B}}\rho||_1}{\tr(P_\mathcal{A}\rho)},
    \end{equation}
    where $P_{\mathcal{B}} = P_{\mathcal{B}_1} + P_{\mathcal{B}_2}$.
\end{theorem}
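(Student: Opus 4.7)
The plan is to follow the skeleton of the classical proof while accounting for the two-sided nature of $\rho$. I will work with the unnormalized state $\tilde\rho \equiv P_\mathcal{A}\rho P_\mathcal{A}$; the bound on $\rho_\mathcal{A}$ then follows by dividing by $\tr(P_\mathcal{A}\rho)$. Since $\mathcal{M}[\rho]=\rho$, the identity $\mathcal{M}[\tilde\rho]-\tilde\rho = -(\mathcal{M}[R]-R)$ holds with $R \equiv \rho - \tilde\rho$. The Kraus condition $(P_{\mathcal{B}_2}+P_\mathcal{C})K_iP_\mathcal{A}=0$ and its Hermitian conjugate force every summand $K_iP_\mathcal{A}\rho P_\mathcal{A} K_i^\dagger$ to be sandwiched by $P_{\mathcal{A}\mathcal{B}_1}\equiv P_\mathcal{A}+P_{\mathcal{B}_1}$ on both sides, so $\mathcal{M}[\tilde\rho]-\tilde\rho$ is supported in the $P_{\mathcal{A}\mathcal{B}_1}$ block and projecting with $P_{\mathcal{A}\mathcal{B}_1}$ preserves its $1$-norm, giving
\begin{equation*}
\|\mathcal{M}[\tilde\rho]-\tilde\rho\|_1 = \|P_{\mathcal{A}\mathcal{B}_1}(R-\mathcal{M}[R])P_{\mathcal{A}\mathcal{B}_1}\|_1 \leq \|T_1\|_1 + \|P_{\mathcal{A}\mathcal{B}_1}\mathcal{M}[R]P_{\mathcal{A}\mathcal{B}_1}\|_1,
\end{equation*}
where $T_1 \equiv P_{\mathcal{A}\mathcal{B}_1}\rho P_{\mathcal{A}\mathcal{B}_1}-\tilde\rho$.

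Next I use the other Kraus condition $(P_\mathcal{A}+P_{\mathcal{B}_1})K_iP_\mathcal{C}=0$, which states $P_{\mathcal{A}\mathcal{B}_1}K_iP_\mathcal{C}=0$ and hence $P_{\mathcal{A}\mathcal{B}_1}K_i = P_{\mathcal{A}\mathcal{B}_1}K_iP_{\mathcal{A}\mathcal{B}_1\mathcal{B}_2}$, together with the Hermitian conjugate $K_i^\dagger P_{\mathcal{A}\mathcal{B}_1} = P_{\mathcal{A}\mathcal{B}_1\mathcal{B}_2}K_i^\dagger P_{\mathcal{A}\mathcal{B}_1}$. Inserting these into $\sum_i P_{\mathcal{A}\mathcal{B}_1} K_i R K_i^\dagger P_{\mathcal{A}\mathcal{B}_1}$ pulls two copies of $P_{\mathcal{A}\mathcal{B}_1\mathcal{B}_2}$ inside the channel, so
\begin{equation*}
P_{\mathcal{A}\mathcal{B}_1}\mathcal{M}[R]P_{\mathcal{A}\mathcal{B}_1} = P_{\mathcal{A}\mathcal{B}_1}\mathcal{M}[R']P_{\mathcal{A}\mathcal{B}_1}, \qquad R' \equiv P_{\mathcal{A}\mathcal{B}_1\mathcal{B}_2} R P_{\mathcal{A}\mathcal{B}_1\mathcal{B}_2},
\end{equation*}
and contractivity of the $1$-norm under both $\mathcal{M}$ and orthogonal projection yields $\|P_{\mathcal{A}\mathcal{B}_1}\mathcal{M}[R]P_{\mathcal{A}\mathcal{B}_1}\|_1 \leq \|R'\|_1$.

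To close, I bound $\|R'\|_1$ and $\|T_1\|_1$ by direct block expansion. Using $P_{\mathcal{A}\mathcal{B}_1\mathcal{B}_2} = P_\mathcal{A}+P_\mathcal{B}$, expanding gives $R' = P_\mathcal{A}\rho P_\mathcal{B} + P_\mathcal{B}\rho P_\mathcal{A} + P_\mathcal{B}\rho P_\mathcal{B}$; each summand carries a $P_\mathcal{B}$ on at least one side, so the triangle inequality together with $\|\rho P_\mathcal{B}\|_1=\|P_\mathcal{B}\rho\|_1$ gives $\|R'\|_1\leq 3\|P_\mathcal{B}\rho\|_1$. Grouping $T_1 = P_{\mathcal{A}\mathcal{B}_1}\rho P_{\mathcal{B}_1} + P_{\mathcal{B}_1}\rho P_\mathcal{A}$ similarly gives $\|T_1\|_1\leq 2\|P_{\mathcal{B}_1}\rho\|_1\leq 2\|P_\mathcal{B}\rho\|_1$. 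Combining, $\|\mathcal{M}[\tilde\rho]-\tilde\rho\|_1\leq 5\|P_\mathcal{B}\rho\|_1$, which is comfortably stronger than the constant $10$ in the claim; dividing by $\tr(P_\mathcal{A}\rho)$ yields the stated inequality.

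The main subtlety is the asymmetry of the Kraus conditions, which constrain $K_i$ only on inputs coming from $\mathcal{A}$ or from $\mathcal{C}$: one must first use the $\mathcal{A}$ condition to localize the outer projectors around $\mathcal{M}[\tilde\rho]-\tilde\rho$ to $P_{\mathcal{A}\mathcal{B}_1}$, and only then use the $\mathcal{C}$ condition to further restrict the middle $R$ to the larger $P_{\mathcal{A}\mathcal{B}_1\mathcal{B}_2}$ block. Projecting naively onto $P_\mathcal{B}$ at the start fails, because the bare difference $R=\rho-\tilde\rho$ contains pieces like $P_\mathcal{A}\rho P_\mathcal{C}$ that are not a priori controlled by $\|P_\mathcal{B}\rho\|_1$ without the intermediate channel argument.
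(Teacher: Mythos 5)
Your proof is correct and in fact proves a strictly stronger bound, with constant $5$ instead of the paper's $10$ (and your bound on $R'$ could even be tightened to $2\|P_\mathcal{B}\rho\|_1$ by grouping $R' = P_{\mathcal{A}\mathcal{B}}\,\rho\, P_\mathcal{B} + P_\mathcal{B}\rho P_\mathcal{A}$, yielding constant $4$). Every step checks out: the identity $\mathcal{M}[\tilde\rho]-\tilde\rho=-(\mathcal{M}[R]-R)$ from stationarity, the observation that the first Kraus condition and its adjoint force $\mathcal{M}[\tilde\rho]$ and $\tilde\rho$ into the $P_{\mathcal{A}\mathcal{B}_1}$ block, the insertion of $P_{\mathcal{A}\mathcal{B}_1\mathcal{B}_2}$ inside the channel using the second Kraus condition, and the contractivity of both $\mathcal{M}$ and outer projections under $\|\cdot\|_1$.

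The paper uses the same building blocks (the complement identity, the two Kraus conditions applied in sequence, contractivity of channels and of projections as in Proposition~\ref{prop:projectors}), but organizes them differently: it expands $\mathcal{M}[P_{\mathcal{B}\mathcal{C}}\rho P_{\mathcal{B}\mathcal{C}} - P_{\mathcal{B}\mathcal{C}}\rho - \rho P_{\mathcal{B}\mathcal{C}}]$ block by block, cancels $6$ of the resulting $16$ terms in pairs against the corresponding pieces of $-P_{\mathcal{B}\mathcal{C}}\rho P_{\mathcal{B}\mathcal{C}} + P_{\mathcal{B}\mathcal{C}}\rho + \rho P_{\mathcal{B}\mathcal{C}}$, and then bounds each of the $10$ surviving blocks by $\|P_\mathcal{B}\rho\|_1$. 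Your reorganization is the cleaner one: sandwiching the entire difference by $P_{\mathcal{A}\mathcal{B}_1}$ \emph{before} expanding kills, for free, every term carrying a $P_{\mathcal{B}_2\mathcal{C}}$ factor on the outside, which is exactly the cancellations the paper does by hand, plus a few more. That is where the factor-of-two improvement comes from. Your closing remark about why you must apply the $\mathcal{A}$ condition first (to localize the outer projectors) and only then the $\mathcal{C}$ condition (to localize the channel's argument) is exactly the right intuition for why the sets must come in the order $\mathcal{A},\mathcal{B}_1,\mathcal{B}_2,\mathcal{C}$.
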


\begin{proof}
    First, we again introduce some shorthand notation: $P_{\mathcal{A}\mathcal{B}_1} = P_{\mathcal{A}}+P_{\mathcal{B}_1}$, $P_{\mathcal{A}\mathcal{B}} = P_{\mathcal{A}} + P_{\mathcal{B}_1} + P_{\mathcal{B}_2}$ etc. We begin by noting that, since $P_\mathcal{A} = \id - P_{\mathcal{B}\mathcal{C}}$ and $\mathcal{M}[\rho] = \rho$, \begin{equation}\label{eq:q_bottleneck_inverted}
        \mathcal{M}[P_\mathcal{A} \rho P_\mathcal{A}] - P_\mathcal{A} \rho P_\mathcal{A} = \mathcal{M}[P_{\mathcal{B}\mathcal{C}} \rho P_{\mathcal{B}\mathcal{C}} - P_{\mathcal{B}\mathcal{C}} \rho - \rho P_{\mathcal{B}\mathcal{C}}] - P_{\mathcal{B}\mathcal{C}} \rho P_{\mathcal{B}\mathcal{C}} + P_{\mathcal{B}\mathcal{C}} \rho + \rho P_{\mathcal{B}\mathcal{C}}.
    \end{equation}
    
    Consider the Kraus representation, $\mathcal{M}[\rho] = \sum_i K_i \rho K_i^\dagger$. From the conditions imposed on $K_i$ we have that
    \begin{equation}
        \mathcal{M}[P_{\mathcal{B}\mathcal{C}}\rho] = (P_{\mathcal{A}\mathcal{B}_1} + P_{\mathcal{B}_2\mathcal{C}}) \mathcal{M}[P_{\mathcal{B}\mathcal{C}}\rho] = P_{\mathcal{A}\mathcal{B}_1} \mathcal{M}[P_{\mathcal{B}}\rho] + P_{\mathcal{B}_2\mathcal{C}}\mathcal{M}[P_{\mathcal{A}\mathcal{B}\mathcal{C}} \rho] = P_{\mathcal{A}\mathcal{B}_1} \mathcal{M}[P_{\mathcal{B}}\rho] + P_{\mathcal{B}_2\mathcal{C}}\rho.
    \end{equation}
    The second of these terms will cancel with the second term of $P_{\mathcal{B}\mathcal{C}}\rho = (P_{\mathcal{B}_1} + P_{\mathcal{B}_2\mathcal{C}})\rho$. Similarly, we have
    \begin{equation}
        \mathcal{M}[\rho P_{\mathcal{B}\mathcal{C}}] = \mathcal{M}[\rho P_{\mathcal{B}}]P_{\mathcal{A}\mathcal{B}_1} + \rho P_{\mathcal{B}_2\mathcal{C}},
    \end{equation}
    where the second term will now cancel the second term of $\rho P_{\mathcal{B}\mathcal{C}} = \rho P_{\mathcal{B}_1} + \rho P_{\mathcal{B}_2\mathcal{C}}$. Finally, we have that 
    \begin{align}
        \mathcal{M}[P_{\mathcal{B}\mathcal{C}}\rho P_{\mathcal{B}\mathcal{C}}] = P_{\mathcal{B}_2\mathcal{C}} \rho P_{\mathcal{B}_2\mathcal{C}} + P_{\mathcal{A}\mathcal{B}_1} \mathcal{M}[P_{\mathcal{B}} \rho P_{\mathcal{B}}] P_{\mathcal{A}\mathcal{B}_1} + P_{\mathcal{A}\mathcal{B}_1} \mathcal{M}[P_{\mathcal{B}} \rho] P_{\mathcal{B}_2\mathcal{C}} + P_{\mathcal{B}_2\mathcal{C}} \mathcal{M}[\rho P_{\mathcal{B}}] P_{\mathcal{A}\mathcal{B}_1},
    \end{align}
    which should be compared with 
    \begin{equation}
        P_{\mathcal{B}\mathcal{C}} \rho P_{\mathcal{B}\mathcal{C}} = P_{\mathcal{B}_2\mathcal{C}}\rho P_{\mathcal{B}_2\mathcal{C}} + P_{\mathcal{B}_1} \rho P_{\mathcal{B}_1} + P_{\mathcal{B}_2\mathcal{C}} \rho P_{\mathcal{B}_1} + P_{\mathcal{B}_1} \rho P_{\mathcal{B}_2\mathcal{C}}.
    \end{equation}
    Thus, combining all of the above, we find that Eq.~\eqref{eq:q_bottleneck_inverted} becomes a sum of 16 terms, 6 of which cancel in pairs. The remaining terms all involve the projector $P_{\mathcal{B}}$ multiplying $\rho$ from either the left or the right (note that $P_{\mathcal{B}_1} \rho = P_{\mathcal{B}_1} P_{\mathcal{B}} \rho$) and we can therefore upper bound them with $||P_{\mathcal{B}}\rho||_1$. This follows from the fact that both the channel $\mathcal{M}$ and multiplication by a projector have the property that they do not increase the trace norm (in the latter case, we prove this fact below in Proposition~\ref{prop:projectors}). For example we have that
    \begin{equation}
        ||P_{\mathcal{A}\mathcal{B}_1} \mathcal{M}[P_{\mathcal{B}}\rho]P_{\mathcal{B}_2\mathcal{C}}||_1 \leq ||\mathcal{M}[P_{\mathcal{B}}\rho]||_1 \leq ||P_{\mathcal{B}}\rho||_1.
    \end{equation}
    Thus, adding together all the remaining 10 terms, the theorem follows.
\end{proof}

\begin{proposition}\label{prop:projectors}
    Let $O$ be an arbitrary operator and $P$ be an orthogonal projector. Then $||PO||_1 \leq ||O||_1$ and similarly $||OP||_1 \leq ||O||_1$        
\end{proposition}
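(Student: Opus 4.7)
The plan is to reduce both inequalities to the standard Hölder inequality for Schatten norms, $||AB||_1 \leq ||A||_\infty \, ||B||_1$, combined with the elementary fact that every orthogonal projector satisfies $||P||_\infty \leq 1$. The operator-norm bound follows because the non-zero eigenvalues of a self-adjoint projector $P = P^\dagger = P^2$ all equal $1$, and the operator norm of a self-adjoint operator coincides with its largest absolute eigenvalue. (If $P=0$ the proposition is vacuous.)

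With those two ingredients in hand, the first inequality is immediate: $||PO||_1 \leq ||P||_\infty \, ||O||_1 \leq ||O||_1$. For the right-multiplied version I would exploit adjoint invariance of the trace norm (the singular values of $X$ and $X^\dagger$ coincide), writing $||OP||_1 = ||(OP)^\dagger||_1 = ||P O^\dagger||_1$ and then applying the first inequality with $O^\dagger$ in place of $O$ to conclude $||OP||_1 \leq ||O^\dagger||_1 = ||O||_1$.

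There is essentially no obstacle here, as this is a textbook fact. If one wished to avoid invoking Schatten--H\"older directly, an equally short self-contained derivation uses the variational formula $||X||_1 = \sup\{|\tr(UX)| : ||U||_\infty \leq 1\}$. For any contraction $U$ the product $UP$ is itself a contraction, since $||UP||_\infty \leq ||U||_\infty \, ||P||_\infty \leq 1$; therefore the supremum defining $||PO||_1 = \sup_U |\tr(U P O)| = \sup_U |\tr((UP) O)|$ is taken over a subset of the contractions that appear in the supremum for $||O||_1$, yielding $||PO||_1 \leq ||O||_1$. The right-multiplied case then follows by the same adjoint trick. The only mild subtlety to check is that one has the correct version of the dual pairing between $||\cdot||_1$ and $||\cdot||_\infty$ in the (possibly infinite-dimensional) Hilbert space of interest, but in the finite-dimensional setting implicit in the preceding theorems this is immediate.
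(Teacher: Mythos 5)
Your proof is correct. Your primary route — Schatten--H\"older, $\norm{PO}_1 \le \norm{P}_\infty \norm{O}_1$, plus $\norm{P}_\infty \le 1$ and the adjoint trick $\norm{OP}_1 = \norm{PO^\dagger}_1$ — is a cleaner and more standard packaging than the paper's argument. The paper instead works from the dual representation $\norm{O}_1 = \max_{\norm{B}=1} \abs{\tr(O^\dagger B)}$ and then, somewhat awkwardly, factors out $\norm{BP}$ inside the maximum and splits the max of a product into a product of maxima; that step is valid (both factors are nonnegative) but requires a silent caveat for the case $BP = 0$ and is less transparent than either of your two versions. Your self-contained alternative — observe that $UP$ is again a contraction whenever $U$ is, so the supremum defining $\norm{PO}_1$ runs over a subset of contractions — is essentially the streamlined version of what the paper is trying to do, and avoids the division-by-$\norm{BP}$ detour entirely. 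Both your routes and the paper's reach the same conclusion; yours via H\"older requires one fewer ad hoc manipulation, while the paper's stays entirely within the duality formula it has already introduced.
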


\begin{proof}
    The statement follows from the duality property of Schatten norms~\cite{watrous2018theory}:
    \begin{equation}
        ||O||_1 = \max_{||B|| = 1}\{|\tr(O^\dagger B)|\},
    \end{equation}
    where $||\ldots||$ is the operator norm. It follows that
    \begin{align}
        ||OP||_1 &= \max_{||B|| = 1}\{|\tr(O^\dagger BP)|\} = \max_{||B|| = 1}\left\{||BP||\cdot{\Big |}\tr\left(O^\dagger \frac{BP}{||BP||}\right){\Big |}\right\} \leq  \max_{||B|| = 1}\{||BP||\}  \max_{||B|| = 1} \left\{{\Big |}\tr\left(O^\dagger \frac{BP}{||BP||}\right){\Big |}\right\} \nonumber \\ &\leq ||O||_1 \max_{||B|| = 1}\{||BP||\}  \leq    ||O||_1 \cdot ||P|| \leq ||O||_1,
    \end{align}
    where we used that $||P|| \leq 1$ for the operator norm of a projector. The statement for $||PO||_1$ follows similarly.
\end{proof}

If $[P_\mathcal{B},\rho] = 0$ then $||P_\mathcal{B}\rho||_1 = ||P_\mathcal{B}\rho P_\mathcal{B}||_1 = \tr(P_\mathcal{B}\rho)$, so the expression in the upper bound becomes $\frac{\tr(P_\mathcal{B}\rho)}{\tr(P_A\rho)}$, which is a ratio of probabilities associated to the two subspaces $\mathcal{B}$ and $\mathcal{A}$. More generally, we can further upper bound the bottleneck ratio in a way that depends only on the probabilities of the two subspaces $\mathcal{A}$ and $\mathcal{B}$ evaluated in the steady state $\rho$ using the following

\begin{lemma}\label{lem:bottleneck_bound}
    For any density matrix $\rho$ and projector $P$, the following inequality holds: $||\rho P||_1 \leq \sqrt{\tr(\rho P)}$.
\end{lemma}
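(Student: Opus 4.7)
The plan is to factorize $\rho = \sqrt{\rho}\sqrt{\rho}$ (using that $\rho$ is positive semidefinite, hence has a unique positive square root) and apply H\"older's inequality for Schatten norms, in the form $\|XY\|_1 \leq \|X\|_2 \, \|Y\|_2$.

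Concretely, I would write
\begin{equation}
    \|\rho P\|_1 = \|\sqrt{\rho}\cdot \sqrt{\rho} P\|_1 \leq \|\sqrt{\rho}\|_2 \cdot \|\sqrt{\rho} P\|_2.
\end{equation}
The first factor is $\|\sqrt{\rho}\|_2 = \sqrt{\tr(\rho)} = 1$ since $\rho$ is a density matrix. For the second factor, I use the definition of the Schatten $2$-norm together with cyclicity of the trace and $P^2 = P$:
\begin{equation}
    \|\sqrt{\rho} P\|_2^2 = \tr\!\left(P\sqrt{\rho}\sqrt{\rho} P\right) = \tr(P\rho P) = \tr(\rho P).
\end{equation}
Combining the two estimates yields $\|\rho P\|_1 \leq \sqrt{\tr(\rho P)}$, as claimed. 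The bound for $\|P\rho\|_1$ then follows immediately by taking Hermitian conjugates, since $\|X\|_1 = \|X^\dagger\|_1$.

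The only nontrivial ingredient is H\"older's inequality for Schatten norms, which is a standard result (see, e.g., Watrous' book already cited in the excerpt). There is really no obstacle to this proof; the main choice is the factorization $\rho = \sqrt{\rho}\sqrt{\rho}$, which converts the problem from one about the $1$-norm of a product into one about the $2$-norm of $\sqrt{\rho}P$, and the latter is directly computable as $\tr(\rho P)$. Alternatively, if one prefers to avoid invoking H\"older's inequality, one can arrive at the same conclusion via Cauchy--Schwarz applied to the Hilbert--Schmidt inner product $\langle X, Y\rangle = \tr(X^\dagger Y)$ after writing $\|\rho P\|_1 = \max_{\|B\|\leq 1}|\tr(B\rho P)| = \max_{\|B\|\leq 1}|\tr(B\sqrt{\rho}\cdot\sqrt{\rho} P)|$, using Proposition~\ref{prop:projectors}'s dual formulation, but the H\"older route is the cleanest.
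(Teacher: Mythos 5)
Your proof is correct, and it takes a genuinely different (and noticeably cleaner) route than the paper's. The paper works entry-wise: it uses the positivity of $\rho$ to derive $|\rho_{ij}| \leq \sqrt{\rho_{ii}\rho_{jj}}$ via Cauchy--Schwarz, then feeds this into the dual representation of the trace norm, introduces auxiliary vectors $\ket{\sqrt{\rho}}$ and $\ket{\sqrt{\rho_{\mathcal{V}}}}$ built from the diagonal entries, and applies Cauchy--Schwarz a second time at the vector level, also invoking the (somewhat delicate) fact that taking entry-wise absolute values does not decrease the operator norm. Your argument packages the same underlying idea --- positivity of $\rho$, hence the existence of $\sqrt{\rho}$ --- entirely at the operator level: the factorization $\rho P = \sqrt{\rho}\cdot\sqrt{\rho}P$ plus H\"older's inequality $\|XY\|_1 \leq \|X\|_2\|Y\|_2$ immediately reduces everything to $\|\sqrt{\rho}\|_2 = 1$ and $\|\sqrt{\rho}P\|_2 = \sqrt{\tr(\rho P)}$, both of which are one-line computations. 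This is basis-free, avoids the entry-wise manipulations and the absolute-value lemma, and relies only on a standard, citable inequality. The alternative route you sketch --- Cauchy--Schwarz for the Hilbert--Schmidt inner product applied to $\tr(B\sqrt{\rho}\cdot\sqrt{\rho}P)$ together with $\|\sqrt{\rho}B^\dagger\|_2^2 = \tr(B\rho B^\dagger) \leq \|B\|^2 \leq 1$ --- is essentially H\"older in disguise and equally valid. In short: same key ingredient (square-root factorization of $\rho$), but a shorter and more robust derivation.
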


\begin{proof}
    We first note that the positivity of $\rho$ puts a constraint between diagonal and off-diagonal matrix elements. In particular, writing $\rho = (\sqrt{\rho})^\dagger \sqrt{\rho}$ and using the Cauchy-Schwarz inequality yields the following
    \begin{equation}
        |\rho_{ij}| \leq \sqrt{\rho_{ii}\rho_{jj}}.
    \end{equation}
    To turn this into a bound on the trace norm, we again use its dual representation. We split the Hilbert space up into $\hilbert = \mathcal{V} \oplus \mathcal{V}^\perp$ where $\mathcal{V}$ is the subspace onto which $P$ projects and take a basis $\{\ket{i}\}$ that respects this decomposition (below, we write $i \in \mathcal{V}$ to mean $\ket{i} \in \mathcal{V}$). Then we have
    \begin{align}
        ||\rho P||_1 &= \max_{||O||=1}\left(|\tr(P\rho O)|\right) = \max_{||O||=1}\left(|\tr(P\rho OP)|\right) = \max_{||O||=1}\left({\Big |}\sum_{i \in \mathcal{V}} \sum_j \rho_{ij} O_{ji}{\Big |}\right) \leq \max_{||O||=1}\left(\sum_{i \in \mathcal{V}} \sum_j |\rho_{ij}| |O_{ji}|\right)  \nonumber \\
        &\leq \max_{||O||=1}\left(\sum_{i \in \mathcal{V}} \sum_j \sqrt{\rho_{ii}\rho_{jj}} |O_{ji}|\right) \equiv \max_{||O||=1}\left( \langle \sqrt{\rho_\mathcal{V}} | \tilde{O} | \sqrt{\rho} \rangle \right),
    \end{align}
    where we defined $\tilde{O}$ as the element-wise absolute value of $O$, $\ket{\sqrt{\rho}}$ as a vector whose $i$-th element is $\sqrt{\rho_{ii}}$ and $\ket{\sqrt{\rho_\mathcal{V}}}$ as a vector whose $j$-th element is $\sqrt{\rho_{jj}}$ if $j \in \mathcal{V}$ and $0$ otherwise. Then using Chauchy-Schwarz, the definition of the operator norm, and the fact that the element-wise absolute value does not decrease the operator norm, we get 
    \begin{align}
        ||\rho P||_1 \leq \max_{||O||=1}\left( (\langle \sqrt{\rho_\mathcal{V}}|\sqrt{\rho_\mathcal{V}}\rangle)^{1/2} (\langle \sqrt{\rho}|\sqrt{\rho}\rangle)^{1/2} ||O|| \right) 
        = (\langle \sqrt{\rho_\mathcal{V}}|\sqrt{\rho_\mathcal{V}}\rangle)^{1/2} = \sqrt{\sum_{j\in\mathcal{V}} \rho_{jj}}.
    \end{align}
\end{proof}

We can easily generalize from a single time step to any evolution that is given by a product of channels. Let $\{\mathcal{M}_\tau\}_{\tau=1}^t$ be an arbitrary collection of channels and $\sigma$ an arbitrary quantum state. Define $\Delta_t := ||\mathcal{M}_1\ldots\mathcal{M}_t[\sigma] - \sigma||_1$ and $\Delta_{\min} := \min_{1\leq\tau\leq t}(||\mathcal{M}_\tau[\sigma]-\sigma||_1)$. We then have that
\begin{align}
        \Delta_t = ||\mathcal{M}_1\ldots \mathcal{M}_{t-1}(\mathcal{M}_t[\sigma]-\sigma) + (\mathcal{M}_1\ldots \mathcal{M}_{t-1}[\sigma] - \sigma)||_1 \leq \Delta_{\min} + \Delta_{t-1} \leq \ldots \leq t \Delta_{\min},
\end{align}
where we again used that channels are contracting with respect to the trace norm. 

We can also use this to bound the distance from the steady state $\rho$:
\begin{align}
    ||\mathcal{M}_1\ldots\mathcal{M}_t[\sigma] - \rho||_1 = ||(\mathcal{M}_1\ldots\mathcal{M}_t[\sigma] - \sigma) + (\sigma - \rho)||_1 \geq ||\sigma-\rho||_1 - \Delta_t \geq ||\sigma-\rho||_1 - t \Delta_{\min}.
\end{align}
Thus the mixing time, defined as the smallest time needed for an arbitrary initial state to get within trace distance $\varepsilon$ or $\rho$, obeys $t_{\text{mix},\varepsilon} \geq \frac{||\sigma-\rho||_1}{\Delta_{\min}} - \varepsilon$. Applying these considerations to Thm.~\ref{thm:quantum_bottleneck_general} we get the following

\begin{corollary}\label{rem:mixing}
    Let $\{\mathcal{M}_\tau\}_{\tau=1}^t$ be a collection of channels sharing the same steady state, $\mathcal{M}_\tau[\rho]=\rho$ and each obeying the condition \autoref{eq:locality_condition} with respect to the same decomposition of the Hilbert space into $\mathcal{A},\mathcal{B}_1,\mathcal{B}_2,\mathcal{C}$. We then have
    \begin{align}
        t_{\text{mix},\varepsilon} \geq \frac{1 - \tr(P_\mathcal{A}\rho)}{5 \Delta} - \varepsilon, \quad \text{where } \, \Delta \equiv \frac{||P_\mathcal{B} \rho||_1}{\operatorname{Tr}(P_\mathcal{A} \rho)}.
    \end{align}
\end{corollary}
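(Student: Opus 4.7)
The plan is to apply Theorem~\ref{thm:quantum_bottleneck_general} once to each channel $\mathcal{M}_\tau$ at the specific trial state $\sigma := \rho_\mathcal{A} = P_\mathcal{A}\rho P_\mathcal{A}/\tr(P_\mathcal{A}\rho)$, and then to plug the resulting estimates directly into the iterated-channel mixing-time inequality $t_{\text{mix},\varepsilon} \geq ||\sigma-\rho||_1/\Delta_{\min} - \varepsilon$ that is derived in the paragraphs just preceding the corollary. Both ingredients become cheap once one commits to this trial state: $\Delta_{\min}$ is controlled by the bottleneck because $\rho_\mathcal{A}$ lives inside $\mathcal{A}$, and $||\sigma-\rho||_1$ is bounded below for exactly the same reason.

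For the denominator, by hypothesis each $\mathcal{M}_\tau$ has $\rho$ as a fixed point and satisfies the locality condition~\eqref{eq:locality_condition} with respect to the same orthogonal decomposition $\mathcal{A}\oplus\mathcal{B}_1\oplus\mathcal{B}_2\oplus\mathcal{C}$, so Theorem~\ref{thm:quantum_bottleneck_general} applies verbatim to each $\mathcal{M}_\tau$ and yields
\begin{equation*}
||\mathcal{M}_\tau[\rho_\mathcal{A}] - \rho_\mathcal{A}||_1 \leq 10\,\Delta, \qquad \tau = 1,\ldots,t,
\end{equation*}
and in particular $\Delta_{\min} \leq 10\Delta$. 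For the numerator, I would invoke the dual (Helstrom) representation of the trace distance between density operators, $||\rho_\mathcal{A} - \rho||_1 \geq 2\,|\tr(E(\rho_\mathcal{A}-\rho))|$ for any $0\leq E \leq \id$; choosing $E=P_\mathcal{A}$ and using $\tr(P_\mathcal{A}\rho_\mathcal{A}) = 1$, this immediately gives $||\rho_\mathcal{A}-\rho||_1 \geq 2(1-\tr(P_\mathcal{A}\rho))$. Inserting both estimates into the mixing-time inequality produces
\begin{equation*}
t_{\text{mix},\varepsilon} \geq \frac{2(1-\tr(P_\mathcal{A}\rho))}{10\,\Delta} - \varepsilon = \frac{1-\tr(P_\mathcal{A}\rho)}{5\,\Delta} - \varepsilon,
\end{equation*}
which is exactly the claim.

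I do not foresee any substantive obstacle: the proof is essentially a bookkeeping combination of Theorem~\ref{thm:quantum_bottleneck_general} with the already-derived iterated mixing inequality. The one arithmetic point to watch is the factor of $2$ in the Helstrom lower bound, since it is precisely what converts the prefactor $10$ coming out of the quantum bottleneck theorem into the $5$ appearing in the denominator of the corollary. Using any other test operator or any weaker witness would correspondingly weaken the bound.
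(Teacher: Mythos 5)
Your proof is correct and follows essentially the same route as the paper: take the trial state $\sigma=\rho_\mathcal{A}$, invoke Theorem~\ref{thm:quantum_bottleneck_general} for each $\mathcal{M}_\tau$ to get $\Delta_{\min}\leq 10\Delta$, lower-bound $\|\rho_\mathcal{A}-\rho\|_1\geq 2(1-\tr(P_\mathcal{A}\rho))$ via the dual/Helstrom characterization with test operator $P_\mathcal{A}$, and plug into the iterated mixing-time inequality derived just before the corollary. The only cosmetic difference is that you phrase the trace-distance lower bound in Helstrom form with $0\leq E\leq\id$ rather than via the maximum over projectors, but the chosen witness is the same and the arithmetic is identical.
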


\begin{proof}
    We take $\sigma = \rho_\mathcal{A} = P_\mathcal{A} \rho P_\mathcal{A} / \tr(P_\mathcal{A}\rho)$. Applying Thm.~\ref{thm:quantum_bottleneck_general}, we then have that $\Delta_{\min} = \min_\tau||\mathcal{M}_\tau[\sigma]- \sigma||_1 \leq 10\Delta$. What remains is to bound $||\sigma-\rho||_1$ which we do by using~\cite{nielsen2010quantum} $||\sigma-\rho||_1 = 2\max_P\tr(P(\sigma-\rho))$, where the maximization goes over all projectors $P$. This gives
    \begin{equation}
        ||\rho_\mathcal{A} - \rho||_1 = 2 \max_{P} \tr(P(\rho_\mathcal{A}-\rho)) \geq 2 \left(\frac{\tr(P_\mathcal{A}\rho P_\mathcal{A})}{\tr(P_\mathcal{A}\rho)} - \tr(P_\mathcal{A}\rho)\right) = 2(1-\tr(P_\mathcal{A}\rho)),
    \end{equation}
    which completes the proof. 
\end{proof}

\subsection{Local channels}

We now specify to the case when $\mathcal{H}$ is the Hilbert space of $n$ qubits and consider channels that obey certain (mild) locality conditions. To define these, consider the set of all Pauli operators (products of Pauli matrices acting on different qubits), acting on at most $r$ qubits, defined as 
\begin{equation}
    \mathcal{P}_r = \{S\mid S \in \mathcal{P},\, |\text{supp}(S)| \leq r\} \subseteq \mathcal{P},
\end{equation} 
with $\text{supp}(S)$ being the support of $S$, i.e.\ the set of qubits it acts on non-trivially. Using this, we can now define neighborhoods and boundaries in Hilbert space as follows.

\begin{definition}
    The $r$-\textbf{Hilbert space neighborhood} of the subspace $\mathcal{V}$ is defined as 
\begin{equation}
    \mathcal{B}_r(\mathcal{V}) \equiv \operatorname{span}\{S\ket{\psi}\mid \ket{\psi} \in \mathcal{V}, S \in \mathcal{P}_r\}.
\end{equation}
Note that $\mathcal{B}_r(\mathcal{V})$ is itself a subspace containing $\mathcal{V}$, $\mathcal{V} \subseteq \mathcal{B}_r(\mathcal{V})$, by construction. The $r$-\textbf{boundary} of $\mathcal{V}$ is defined as the orthogonal complement of $\mathcal{V}$ within $\mathcal{B}_r(\mathcal{V})$. In terms of the corresponding projectors: $P_{\partial_r \mathcal{V}} = P_{\mathcal{B}_r(\mathcal{V})} (1\!\!1 - P_{\mathcal{V}})$.
\end{definition}

\begin{remark}
    Since $\mathcal{V}$ is contained within $\mathcal{B}_r(\mathcal{V})$, their projectors commute and $P_{\mathcal{B}_r(\mathcal{V})} = P_{\mathcal{V}} + P_{\partial_r \mathcal{V}}$.
\end{remark}

    The subspace $\mathcal{B}_r(\mathcal{V})$ can be thought of as the states ``within distance $r$'' of $\mathcal{V}$. This interpretation is facilitated by the following:
\begin{proposition}
    For any subspace $\mathcal{V}$ and $r,s \geq 0$:
    \begin{equation}
        \mathcal{B}_r(\mathcal{B}_{s}(\mathcal{V})) = \mathcal{B}_{r+s}(\mathcal{V}).
    \end{equation}
\end{proposition}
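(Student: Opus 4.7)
The plan is to prove the equality by showing both inclusions, using that the Pauli operators $\mathcal{P}_r$ are closed under taking products modulo overall phases, and that a Pauli of weight $\leq r+s$ can be factored into a pair of Paulis on disjoint supports of sizes $\leq r$ and $\leq s$.

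For the inclusion $\mathcal{B}_r(\mathcal{B}_s(\mathcal{V})) \subseteq \mathcal{B}_{r+s}(\mathcal{V})$, it suffices to check it on spanning elements. A generic spanning vector of $\mathcal{B}_r(\mathcal{B}_s(\mathcal{V}))$ has the form $T(S\ket{\psi})$ with $T\in\mathcal{P}_r$, $S\in\mathcal{P}_s$, and $\ket\psi\in\mathcal{V}$. The product $TS$ of two Pauli strings equals $e^{i\theta}R$ for some phase $e^{i\theta}\in\{\pm 1,\pm i\}$ and some Pauli string $R$ whose support is contained in $\operatorname{supp}(T)\cup\operatorname{supp}(S)$, hence of weight at most $r+s$. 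Therefore $TS\ket{\psi} = e^{i\theta} R\ket{\psi} \in \mathcal{B}_{r+s}(\mathcal{V})$, and the inclusion follows by taking spans.

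For the reverse inclusion $\mathcal{B}_{r+s}(\mathcal{V}) \subseteq \mathcal{B}_r(\mathcal{B}_s(\mathcal{V}))$, take a spanning vector $R\ket{\psi}$ of the left-hand side, with $R\in\mathcal{P}_{r+s}$ and $\ket{\psi}\in\mathcal{V}$. Since $|\operatorname{supp}(R)|\leq r+s$, I can partition $\operatorname{supp}(R)$ into disjoint subsets $A,B$ with $|A|\leq r$ and $|B|\leq s$. Let $T$ be the tensor product of the single-qubit Paulis of $R$ on $A$ (extended by identity elsewhere) and $S$ the analogous restriction to $B$; then $T\in\mathcal{P}_r$, $S\in\mathcal{P}_s$, and $R=TS$ because disjoint Pauli factors commute and multiply componentwise. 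Hence $R\ket{\psi} = T(S\ket{\psi})$, and since $S\ket{\psi}\in\mathcal{B}_s(\mathcal{V})$ by definition, this vector lies in $\mathcal{B}_r(\mathcal{B}_s(\mathcal{V}))$.

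Neither step is really a major obstacle — the only small care needed is the phase bookkeeping in the product $TS$ (which does not affect membership in the span) and the observation that ``support $\leq r+s$'' always admits a split into parts of sizes $\leq r$ and $\leq s$. If instead one wanted to avoid the phase issue altogether, one could define $\mathcal{P}_r$ to include the $\pm 1,\pm i$ multiples of Pauli strings, in which case $\mathcal{P}_r\cdot\mathcal{P}_s\subseteq\mathcal{P}_{r+s}$ holds on the nose; but even without this, working up to scalars is harmless since we are only interested in the linear span of the resulting vectors.
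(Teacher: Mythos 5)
Your proof is correct and follows essentially the same route as the paper's: for one inclusion, observe that the product of a weight-$\leq r$ Pauli with a weight-$\leq s$ Pauli is (up to phase) a Pauli of weight $\leq r+s$; for the other, split a weight-$\leq r+s$ Pauli into a product of two Paulis of the required weights. The only difference is that you are explicit about the $\{\pm 1,\pm i\}$ phase arising from products of Pauli strings (which the paper silently absorbs when it writes $S''S' \in \mathcal{P}_{r+s}$), and you phrase everything in terms of spanning vectors rather than the paper's ``generic element'' expansions; both are harmless presentational choices.
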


    \begin{proof}
     First, we will show that $\mathcal{B}_r(\mathcal{B}_{s}(\mathcal{V})) \subseteq \mathcal{B}_{r+s}(\mathcal{V})$. A generic element of $\mathcal{B}_s(\mathcal{V})$ takes the form $\ket{\psi'} = \sum_{S' \in \mathcal{P}_s} c'_{S'} S' \ket{\psi_{S'}}$ where $\ket{\psi_{S'}} \in \mathcal{V}$. We then have that a generic element of $\mathcal{B}_r(\mathcal{B}_{s}(\mathcal{V}))$ can be written as
    \begin{equation}
        \ket{\psi''} = \sum_{S''\in\mathcal{P}_r} c''_{S''} S'' \ket{\psi'_{S''}} = \sum_{S''\in \mathcal{P}_r} \sum_{S' \in \mathcal{P}_s} c''_{S''} c'_{S',S''} S'' S' \ket{\psi_{S',S''}}.
    \end{equation}
    Here $\ket{\psi_{S',S''}} \in \mathcal{V}$ and since $S'' S' \in \mathcal{P}_{r+s}$, we have that $\ket{\psi''} \in \mathcal{B}_{r+s}$.

    To see containment in the other direction, $\mathcal{B}_{r+s}(\mathcal{V}) \subseteq \mathcal{B}_r(\mathcal{B}_{s}(\mathcal{V}))$, write a generic element of $\mathcal{B}_{r+s}(\mathcal{V})$ as $\ket{\psi} = \sum_{S \in \mathcal{P}_{r+s}} c_S \ket{\psi_S}$. If $S\in\mathcal{P}_{r+s}$, then it can be split (in a non-unique way) as $S = S'' S'$, where $S' \in \mathcal{P}_s$ and $S'' \in \mathcal{P}_r$. Choose such a splitting for every $S$; then $S' \ket{\psi_{S=S'S''}} \in \mathcal{B}_s(\mathcal{V})$ and consequently $S\ket{\psi_S} = S'' S'\ket{\psi_S} \in \mathcal{B}_r(\mathcal{B}_{s}(\mathcal{V}))$. 
\end{proof}

\begin{definition}
    Let $\mathcal{M}$ be a quantum channel. We say that $\mathcal{M}$ is $r$-\textbf{local} if it has a Kraus representation, $\mathcal{M}[\rho] = \sum_i K_i \rho K_i^\dagger$, such that all the Kraus operators $K_i$ act on at most $r$ qubits. 
\end{definition}

\begin{remark}
     If $K$ acts on at most $r$ qubits, it can be written as $K = \sum_{S \in \mathcal{P}_r} a_{S}S$ for some set of coefficients $a_{S}$.
\end{remark}

\begin{theorem}[Quantum Bottleneck Theorem, local channels]
\label{thm:bottleneck}
    Let $\mathcal{M}$ be an $r$-local quantum channel with steady state $\rho$. Let $\mathcal{V}$ be a subspace and define the quantum bottleneck ratio as
    \begin{equation}
        \frac{||P_{\partial_{2r}\mathcal{V}}\rho||_1}{\tr(P_{\mathcal{V}}\rho)} = \frac{||\rho P_{\partial_{2r}\mathcal{V}}||_1}{\tr(P_{\mathcal{V}}\rho)} = \Delta.
    \end{equation}
    Define the projected state $\rho_{\mathcal{V}} = P_{\mathcal{V}} \rho P_{\mathcal{V}} / \tr(P_{\mathcal{V}}\rho)$. Then $\rho_{\mathcal{V}}$ is an approximate steady state of $\mathcal{M}$ in the following sense:
    \begin{equation}
        ||\mathcal{M}[\rho_{\mathcal{V}}] - \rho_{\mathcal{V}}||_1 \leq 10 \Delta.
    \end{equation}
\end{theorem}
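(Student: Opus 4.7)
The plan is to invoke Theorem~\ref{thm:quantum_bottleneck_general} with a four-way orthogonal decomposition of $\hilbert$ built from the Hilbert-space neighborhoods of $\mathcal{V}$ at radii $r$ and $2r$. Concretely, I would set $\mathcal{A} = \mathcal{V}$, $\mathcal{B}_1 = \partial_r \mathcal{V}$ (so that $\mathcal{A}\oplus\mathcal{B}_1 = \mathcal{B}_r(\mathcal{V})$), $\mathcal{B}_2$ equal to the orthogonal complement of $\mathcal{B}_r(\mathcal{V})$ inside $\mathcal{B}_{2r}(\mathcal{V})$ (so that $\mathcal{A}\oplus\mathcal{B}_1\oplus\mathcal{B}_2 = \mathcal{B}_{2r}(\mathcal{V})$), and $\mathcal{C}$ equal to the orthogonal complement of $\mathcal{B}_{2r}(\mathcal{V})$ in $\hilbert$. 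By construction, $P_{\mathcal{B}_1}+P_{\mathcal{B}_2} = P_{\partial_{2r}\mathcal{V}}$ and $P_{\mathcal{A}}=P_{\mathcal{V}}$, so the bound delivered by the general theorem will already be in the form $10\Delta$ demanded by the statement.

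What remains is to check that each $r$-local Kraus operator $K_i$ satisfies the locality condition \eqref{eq:locality_condition} for this particular decomposition. The first half, $(P_{\mathcal{B}_2}+P_\mathcal{C})K_i P_\mathcal{A} = 0$, is essentially by definition: writing $K_i = \sum_{S\in\mathcal{P}_r} a_{i,S} S$ as in the remark preceding the theorem, each summand maps $\mathcal{V}$ into $\mathcal{B}_r(\mathcal{V}) = \mathcal{A}\oplus\mathcal{B}_1$, which is orthogonal to $\mathcal{B}_2\oplus\mathcal{C}$. The second half, $(P_\mathcal{A}+P_{\mathcal{B}_1})K_i P_\mathcal{C} = 0$, is the one step that genuinely uses the geometry of neighborhoods and is, I expect, the main (albeit mild) obstacle. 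I would handle it by duality: it suffices to show $\langle v | K_i | c \rangle = 0$ for every $\ket{v}\in\mathcal{B}_r(\mathcal{V})$ and every $\ket{c}\in\mathcal{C}$. Since $K_i^\dagger$ is also $r$-local, the vector $K_i^\dagger\ket{v}$ lies in $\mathcal{B}_r(\mathcal{B}_r(\mathcal{V})) = \mathcal{B}_{2r}(\mathcal{V})$ by the composition proposition proved just above, and is therefore orthogonal to $\mathcal{C}$. Thus $\langle v | K_i | c \rangle = \overline{\langle c | K_i^\dagger | v \rangle} = 0$, as needed.

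With both locality conditions verified, Theorem~\ref{thm:quantum_bottleneck_general} applies verbatim and yields $||\mathcal{M}[\rho_\mathcal{V}] - \rho_\mathcal{V}||_1 \leq 10\, ||P_{\partial_{2r}\mathcal{V}}\rho||_1/\tr(P_\mathcal{V}\rho) = 10\Delta$. The alternative presentation $||\rho P_{\partial_{2r}\mathcal{V}}||_1 = ||P_{\partial_{2r}\mathcal{V}}\rho||_1$ appearing in the theorem statement is just the identity $||O||_1 = ||O^\dagger||_1$ applied to $O = P_{\partial_{2r}\mathcal{V}}\rho$, together with Hermiticity of $\rho$ and of the projector. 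The only substantive idea in the whole argument is the doubling of the radius, which is forced precisely by the composition law $\mathcal{B}_r(\mathcal{B}_r(\mathcal{V})) = \mathcal{B}_{2r}(\mathcal{V})$ in the dualized form of the second locality condition.
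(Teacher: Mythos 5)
Your proof is correct and takes the same route as the paper: identify $\mathcal{A}=\mathcal{V}$, $\mathcal{B}_1=\partial_r\mathcal{V}$, $\mathcal{B}_2$ as the annulus between $\mathcal{B}_r(\mathcal{V})$ and $\mathcal{B}_{2r}(\mathcal{V})$, $\mathcal{C}$ as the complement of $\mathcal{B}_{2r}(\mathcal{V})$, and then invoke Theorem~\ref{thm:quantum_bottleneck_general} after checking the two locality conditions. In fact your duality argument for $(P_\mathcal{A}+P_{\mathcal{B}_1})K_i P_\mathcal{C}=0$ (using $r$-locality of $K_i^\dagger$ and $\mathcal{B}_r(\mathcal{B}_r(\mathcal{V}))=\mathcal{B}_{2r}(\mathcal{V})$) spells out the one step the paper compresses into ``a similar argument,'' so your write-up is, if anything, slightly more explicit than the original.
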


\begin{proof}
        The claim follows directly from Thm.~\ref{thm:quantum_bottleneck_general} with an appropriate identification of subspaces. In particular, we split the $2r$-neighborhood of $\mathcal{V}$ region into two halves by using $\mathcal{B}_{2r}(\mathcal{V}) = \mathcal{B}_r(\mathcal{B}_r(\mathcal{V}))$ and denote by $P_\mathcal{A}$, $P_{\mathcal{A}\mathcal{B}_1}$, $P_{\mathcal{A}\mathcal{B}_1\mathcal{B}_2} = P_{\mathcal{A}\mathcal{B}}$ the projectors onto $\mathcal{V}$, $\mathcal{B}_r(\mathcal{V})$ and $\mathcal{B}_{2r}(\mathcal{V})$ respectively. This induces a decomposition of the Hilbert space into four orthogonal subspaces $1\!\!1 = P_\mathcal{A} + P_{\mathcal{B}_1} + P_{\mathcal{B}_2} + P_\mathcal{C}$, where $P_{\mathcal{B}_1} = P_{\mathcal{A}\mathcal{B}_1}(1\!\!1-P_\mathcal{A})$, $P_{\mathcal{B}_2} = P_{\mathcal{A}\mathcal{B}_1\mathcal{B}_2}(1\!\!1-P_{\mathcal{A}\mathcal{B}_1})$ and $P_{\mathcal{C}} = 1\!\!1-P_{\mathcal{A}\mathcal{B}_1\mathcal{B}_2}$. 

        Consider now the Kraus representation of $\mathcal{M}$ with Kraus operators $i$ that are assumed to be $r$-local operators and let $\ket{\psi}\in\mathcal{V}$. Then we have that $K_i\ket{\psi} = \sum_{S\in\mathcal{P}_r} a_{S}^{(i)}S\ket{\psi} \in \mathcal{B}_r(\mathcal{V})$. This implies $P_{\mathcal{B}_2\mathcal{C}}K_i P_\mathcal{A} = 0$. A similar argument also yields $P_{\mathcal{A}\mathcal{B}_1} K_i P_\mathcal{C} = 0$. Thus $\mathcal{M}$ fulfills the conditions of Thm.~\ref{thm:quantum_bottleneck_general} and the claim follows. 
\end{proof}

We can generalize the theorem to a more general set of channels, which fulfill a less stringent locality condition.

\begin{definition}
    We say that the channel $\mathcal{M}$ is $f$-\textbf{quasi-local}, for a decaying function $\lim_{r\to\infty}f(r)=0$, if $\forall r \geq 1$ there exists an $r$-local channel $\mathcal{M}^{(r)}$ such that $||\mathcal{M}-\mathcal{M}^{(r)}||_\diamond \leq f(r)$. 
\end{definition}

For these, we can prove a different bottleneck theorem:

\begin{theorem}[Quantum Bottleneck Theorem, quasi-local channels]
    Let $\mathcal{M}$ be an $f$-quasi-local channel with steady state $\rho$. Let $\mathcal{V}$ be a subspace and $\rho_{\mathcal{V}} = P_{\mathcal{V}}\rho P_{\mathcal{V}} / \tr(P_{\mathcal{V}}\rho)$ be the steady state projected onto this subspace. Define 
    \begin{equation}
    \Delta(s) \equiv \frac{||P_{\partial_{2s}\mathcal{V}}\rho||_1}{\tr(P_{\mathcal{V}}\rho)}
    \end{equation}
    We then have 
    \begin{equation}
        ||\mathcal{M}[\rho_{\mathcal{V}}] - \rho_{\mathcal{V}}||_1 \leq \inf_{1\leq s \leq n/2}\left(10\Delta(s) + f(s)\right).
    \end{equation}    
\end{theorem}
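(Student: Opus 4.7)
The plan is to reduce the quasi-local case to the local one (Thm.~\ref{thm:bottleneck}) via the triangle inequality. Using the $f$-quasi-locality assumption, fix for every $1\leq s\leq n/2$ an $s$-local channel $\mathcal{M}^{(s)}$ with $||\mathcal{M}-\mathcal{M}^{(s)}||_\diamond\leq f(s)$, and split
\begin{equation*}
||\mathcal{M}[\rho_{\mathcal{V}}]-\rho_{\mathcal{V}}||_1 \leq ||(\mathcal{M}-\mathcal{M}^{(s)})[\rho_{\mathcal{V}}]||_1 + ||\mathcal{M}^{(s)}[\rho_{\mathcal{V}}]-\rho_{\mathcal{V}}||_1.
\end{equation*}
The first term is bounded by $f(s)$ directly from the definition of the diamond norm together with the fact that $\rho_{\mathcal{V}}$ is a density matrix; this is the origin of the additive $f(s)$ in the claimed bound.

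For the second term I want to apply Thm.~\ref{thm:bottleneck} with the $s$-local channel $\mathcal{M}^{(s)}$: its Kraus operators satisfy the locality condition~\eqref{eq:locality_condition} with respect to the four-way orthogonal decomposition of $\hilbert$ induced by the nested subspaces $\mathcal{V}\subset \mathcal{B}_s(\mathcal{V})\subset \mathcal{B}_{2s}(\mathcal{V})\subset \hilbert$, which is well-defined since $\mathcal{B}_{2s}(\mathcal{V})=\mathcal{B}_s(\mathcal{B}_s(\mathcal{V}))$. The main obstacle is that Thm.~\ref{thm:bottleneck} also requires the channel to fix the underlying state, whereas here $\mathcal{M}^{(s)}[\rho]\neq \rho$ in general, with only $||\mathcal{M}^{(s)}[\rho]-\rho||_1\leq ||\mathcal{M}^{(s)}-\mathcal{M}||_\diamond \leq f(s)$. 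I handle this by replaying the proof of Thm.~\ref{thm:quantum_bottleneck_general} with $\mathcal{M}^{(s)}$ in the channel slot and $\rho$ still as the underlying state, tracking an extra residue $\delta \equiv \mathcal{M}^{(s)}[\rho] - \rho$ wherever the original argument uses $\mathcal{M}[\rho]=\rho$ (namely in the one-sided relation $P_{\mathcal{B}_2\mathcal{C}}\mathcal{M}^{(s)}[\rho] = P_{\mathcal{B}_2\mathcal{C}}\rho + P_{\mathcal{B}_2\mathcal{C}}\delta$, its Hermitian conjugate, and the analogous two-sided version). Each such residue has trace norm at most $f(s)$ and can be absorbed into the $f(s)$ already produced by the first piece of the triangle inequality.

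Putting the two bounds together yields $||\mathcal{M}[\rho_{\mathcal{V}}]-\rho_{\mathcal{V}}||_1 \leq 10\Delta(s)+f(s)$ for every admissible $s$, and taking the infimum over $1\leq s\leq n/2$ completes the proof. The delicate step is the bookkeeping of the residues: one must verify that they merge cleanly into an additive $O(f(s))$ correction and do not introduce extra factors of $1/\tr(P_{\mathcal{V}}\rho)$ from the normalization of $\rho_{\mathcal{V}}$, which would spoil the clean form of the bound.
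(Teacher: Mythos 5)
Your high-level plan---split $\mathcal{M}$ into an $s$-local piece $\mathcal{M}^{(s)}$ plus a remainder, bound the remainder via the diamond norm, bound the local piece via the bottleneck theorem, and take an infimum over $s$---is exactly the paper's route. But you notice something the paper's one-line proof glosses over: $\mathcal{M}^{(s)}$ does \emph{not} fix $\rho$, so neither Thm.~\ref{thm:bottleneck} nor Thm.~\ref{thm:quantum_bottleneck_general} can be invoked for $\mathcal{M}^{(s)}$ with $\rho$ in the steady-state slot. The paper nevertheless cites Thm.~\ref{thm:quantum_bottleneck_general} ``to bound the first term''; as written that application is not licensed. Your instinct to replay the proof of Thm.~\ref{thm:quantum_bottleneck_general} with $\mathcal{M}^{(s)}$ while carrying a residue $\delta=\mathcal{M}^{(s)}[\rho]-\rho$ is the natural repair.

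Unfortunately your final worry is not just a formality: it is the actual obstruction. Carrying $\delta$ through the cancellation argument of Thm.~\ref{thm:quantum_bottleneck_general}, the $\delta$-dependent terms in the unnormalized difference collapse to $P_{\mathcal{A}\mathcal{B}_1}\,\delta\,P_{\mathcal{A}\mathcal{B}_1}$, of trace norm $\le\|\delta\|_1\le f(s)$; so far so good. But the quantity one bounds there is $\|\mathcal{M}^{(s)}[P_{\mathcal{V}}\rho P_{\mathcal{V}}]-P_{\mathcal{V}}\rho P_{\mathcal{V}}\|_1 \le 10\|P_{\mathcal{B}}\rho\|_1 + f(s)$, and passing to $\rho_{\mathcal{V}}$ requires dividing the whole right-hand side by $\tr(P_{\mathcal{V}}\rho)$. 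The first summand becomes $10\Delta(s)$ by design, but the residue becomes $f(s)/\tr(P_{\mathcal{V}}\rho)$, and since $\tr(P_{\mathcal{V}}\rho)\le 1$ this is \emph{larger} than $f(s)$ and cannot be merged into the additive $f(s)$ coming from the triangle inequality. So your approach (and, for the same reason, the paper's own citation of Thm.~\ref{thm:quantum_bottleneck_general}) actually establishes $\|\mathcal{M}[\rho_{\mathcal{V}}]-\rho_{\mathcal{V}}\|_1 \le 10\Delta(s) + f(s)\bigl(1 + 1/\tr(P_{\mathcal{V}}\rho)\bigr)$ rather than the clean $10\Delta(s)+f(s)$. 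You were right to flag this; it is a genuine issue of exactly the kind you feared, not a detail one can wave away, although it does not affect any of the downstream applications in the paper, where $\tr(P_{\mathcal{V}}\rho)=\Theta(1)$ and only the scaling of $\Delta(s)$ and $f(s)$ matters. If you want the literal constant in the theorem, you would need either to assume $\mathcal{M}^{(s)}[\rho]=\rho$ in the definition of $f$-quasi-locality, or to state the result with the correction $f(s)/\tr(P_{\mathcal{V}}\rho)$.
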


\begin{proof}
    Fix $s \geq 1$ and write $\mathcal{M} = \mathcal{M}^{(s)} + \delta \mathcal{M}^{(s)}$, where $\mathcal{M}^{(s)}$ is the closest $s$-local approximation of $\mathcal{M}$. We then have
    \begin{equation}
        ||\mathcal{M} [\rho_\mathcal{V}] - \rho_\mathcal{V}||_1 = ||\mathcal{M}^{(s)}[\rho_\mathcal{V}] - \rho_\mathcal{V} + \delta \mathcal{M}^{(s)}[\rho_\mathcal{V}]||_1 \leq ||\mathcal{M}^{(s)}[\rho_\mathcal{V}] - \rho_\mathcal{V}||_1 + ||\delta M^{(s)}[\rho_\mathcal{V}]||_1 \leq 10\Delta(s) +f(s),
    \end{equation}
    where we used \cref{thm:quantum_bottleneck_general} to bound the first term and we used that $||\mathcal{M}[\sigma]||_1 \leq ||\mathcal{M}||_\diamond ||\sigma||_1$ for any super-operator $\mathcal{M}$. Since the upper bound applies for any choice of $s$, the statement follows.
\end{proof}

\section{Stability of commuting projector Hamiltonians with extensive energy barriers}

We consider Hamiltonians of the form $H = H_0 + V$ on $n$ qubits. The unperturbed Hamiltonian, $H_0 = \sum_k \Pi_k$, is a sum of mutually commuting projectors $\Pi_k$ and we assume both that each projector acts on finitely many qubits and each qubit appears in finitely many terms of $H_0$. $V$ is a perturbation that is a sum of local terms and whose operator norm satisfies $||V|| \leq g n$ for some constant perturbation strength $g$. 

We will assume that $H_0$ exhibits linear energy barriers around various local energy minima. In particular, we consider the setup of Thm.~\ref{thm:bottleneck} above. Let $\mathcal{V}$ be a subspace of $\hilbert$ and $\partial_{r(n)}\mathcal{V}$ its boundary for some function $r(n)$ diverging super-logarithmically in $n$; in the following we simplify notation and refer to this boundary subspace simply as $\partial\mathcal{V}$. Let $E_{0,\min}(\mathcal{V}) \equiv \min_{\ket{\psi}\in\mathcal{V}}(\langle\psi|H_0|\psi\rangle)$ denote the minimal energy in $\mathcal{V}$. We then say that $H_0$ has an extensive energy barrier surrounding $\mathcal{V}$ if 
\begin{equation}\label{eq:def_lin_barrier}
    E_{0,\min}(\partial\mathcal{V}) - E_{0,\min}(\mathcal{V}) \geq \kappa n    
\end{equation}
for some constant $\kappa > 0$. In the following, we will write $E_{0,\min}(\mathcal{V}) = \epsilon n$ where $\epsilon$ defines an energy density associated to the subspace $\mathcal{V}$ (which might be zero). Our goal is to argue that whenever \autoref{eq:def_lin_barrier} holds, and $\epsilon$ is sufficiently small, the bottleneck ratio $\Delta$ is exponentially small in $n$ for sufficiently large $\beta$ and small perturbation strength $g$. 

Before discussing the proof, let us review some examples that satisfy this property, originating from classical and quantum error correcting codes. In these cases, a property of codes called \emph{expansion} guarantees the existence of extensive energy barriers for various choices of subspaces $\mathcal{V}$, including ones at finite energy density, with $\epsilon > 0$. Indeed, for certain codes, the number of distinct subspaces having such barriers is exponentially large in $n$ and increases with $\epsilon$; see Ref. \cite{LDPC_glass}.

\begin{example}[Classical expander codes]
    We can construct Hamiltonians out of classical error correcting codes by taking $\Pi_k = (1\!\!1-C_k)/2$ where $C_k = \prod_{i \in \delta(k)} Z_i$ correspond to the parity check of a classical code (with $Z_i$ the diagonal Pauli matrix acting on qubit $i$), defined by its support $\delta(k)$ (see e.g.~\cite{rakovszky2023physics} for more details). Eigenstates of $H_0$ are diagonal in the computational basis and their energy is given by the number of violated checks. We can represent each configuration by a binary vector $\vec x$ and denote its energy by $E_0(\vec x)$. 
    
    There exist examples, including various constructions of good classical codes~\cite{sipser_spielman1996} that satisfy $(\delta,\gamma)$-\textbf{expansion}
    \begin{equation}
        |\vec x| \leq \delta n \Rightarrow E_0(\vec x) \geq \gamma |\vec x|,
    \end{equation}
    meaning that the energy grows linearly with the Hamming weight $|\vec x|$. This already implies an extensive energy barrier around any of the ground states of $H_0$ (the code's \emph{codewords}). We can take $\vec x_0$ to be such a codeword, satisfying $E_0(\vec x_0) = 0$ and choose $\mathcal{V}$ to be the subspace spanned by configurations that are within Hamming distance $\leq \delta n /2$ of $\vec x_0$. If we then take $r(n) \leq \delta n /2$ in the definition of $\partial\mathcal{V}$, it will satisfy Eq.~\eqref{eq:def_lin_barrier} with  $\kappa = \gamma\delta /2$.

    In fact, expansion also implies the existence of extensive barriers surrounding other low-energy configurations. It follows from the triangle inequality for the Hamming weight that~\cite{LDPC_glass} 
    \begin{equation}
        |\vec x + \vec x_0| \leq \delta n \Rightarrow E_0(\vec x) - E_0(\vec x_0) \geq \gamma |\vec x + \vec x_0| - 2E_0(\vec x_0).
    \end{equation}
    Therefore, when $E_0(\vec x_0)$ is sufficiently small, $\vec x_0$ is still surrounded by an extensive energy barrier. For example, if $E_0(\vec x_0) \leq \gamma\delta n /4$, we can take $\mathcal{V}$ to be a subspace defined by configurations within Hamming distance $\frac{3}{4}\delta n$ and $r(n) \leq \delta n /4$ to get an extensive barrier with $\kappa = \delta\gamma/4$.
\end{example}

\begin{example}[Good quantum CSS codes]
    Similar expansion properties are known to hold for certain \emph{quantum} error correcting codes. In these cases, there are two sets of parity checks, $B_k = \prod_{i \in \delta_Z(k)}Z_i$ and $A_{k'} = \prod_{i \in \delta_X(k')}X_i$, all mutually commuting, and we can combine them into a stabilizer Hamiltonian $H_0 = \sum_k \frac{1\!\!1 + B_k}{2} + \sum_{k'} \frac{1\!\!1 + A_{k'}}{2}$ whose ground states comprise the codestates of the corresponding quantum code. Other eigenstates can be written as 
    \begin{equation}\label{eq:code_eigstates}
    \ket{\vec x, \vec z} \equiv \left(\prod_{i:x_i=1} X_i\right) \left(\prod_{i':z_{i'}=1} Z_{i'}\right) \ket{\psi_0},   
    \end{equation}
    where $\ket{\psi_0}$ is a groundstate of $H_0$ and $\vec x, \vec z$ are a pair of binary vectors. Energy with respect to $H_0$ is given by the number of checks of each type that is violated in such a state and can be written as $E_0(\vec x, \vec z) = E_0^Z(\vec x) + E_0^X(\vec z)$. 

    Importantly, the labeling of eigenstates by $\vec x, \vec z$ is not unique: the operator $\prod_{i:x_i=1} X_i$ can be deformed by multiplying it with the checks $A_i$ without changing its effect on $\ket{\psi_0}$ and similarly for the $Z$-type operators. One thus defines a ``reduced weight'' $|\vec x|_{\rm red}$ as the minimum of the Hamming weight $|\vec x|$ over all equivalent choices of $\vec x$. Code expansion is then defined in terms of this reduced weight as~\cite{leverrier2015quantum_expander,dinur2023qldpc}
    \begin{align}
        |\vec x|_{\rm red} \leq \delta n &\Rightarrow E^Z_0(\vec x) \geq \gamma |\vec x|_{\rm red}, \nonumber \\
        |\vec z|_{\rm red} \leq \delta n &\Rightarrow E^X_0(\vec z) \geq \gamma |\vec z|_{\rm red}.     
    \end{align}
    The discussion from classical codes then carries over using the modified definition of distance. I.e., we can fix a ground state $\ket{\psi_0}$ and define the subspace $\mathcal{V}$ around it to be spanned by all eigenstates $\ket{\vec x,\vec z}$ of the form Eq.~\eqref{eq:code_eigstates} with $|\vec x|_{\rm red},|\vec z|_{\rm red} \leq \delta n /2$. Since any operator can be written as a linear combination of products of Pauli $X$ and $Z$ operators, multiplying a vector in $\mathcal{V}$ with an $r$-local operator results in a state that is a superposition of eigenstates with $|\vec x|_{\rm red},|\vec z|_{\rm red} \leq \delta n /2 + r$. Therefore if we take $r(n) \leq \delta n /2$ in the definition of $\partial\mathcal{V}$, we will satisfy Eq.~\eqref{eq:def_lin_barrier} with $\kappa = \gamma\delta$. 
    
    Finally, using the triangle inequality, now for $\abs{\bullet}_{\rm red}$, we get~\cite{LDPC_glass} 
    \begin{align}
        |\vec x + \vec x_0|_{\rm red} \leq \delta n &\Rightarrow E^Z_0(\vec x) - E^Z_0(\vec x_0) \geq \gamma |\vec x + \vec x_0|_{\rm red} - 2E^Z_0(\vec x_0), \nonumber \\
        |\vec z + \vec z_0|_{\rm red} \leq \delta n &\Rightarrow E^X_0(\vec z) - E^X_0(\vec z_0) \geq \gamma |\vec z + \vec z_0|_{\rm red} - 2E^X_0(\vec x_0).
    \end{align}
    We can therefore take some reference state $\ket{\vec x_0,\vec z_0} $ with energy $E_0(\vec x_0,\vec z_0) \leq \gamma\delta n/4$ and take $\mathcal{V}$ to be defined by eigenstates of the form $\ket{\vec x_0 +\vec x,\vec z_0 +\vec z}$ with $|\vec x|_{\rm red},|\vec z|_{\rm red} \leq 3\delta n /4$ and choose $r(n) \leq \delta n / 4$ to achieve Eq.~\eqref{eq:def_lin_barrier} with $\kappa = \delta\gamma/2$.
\end{example}

We now want to prove that extensive barriers with respect to $H_0$ imply an exponentially small bottleneck ratio even in the perturbed model $H_0 + V$, assuming the perturbation is weak and the temperature is low. 

To bound the bottleneck ratio, we will make use of Lemma~\ref{lem:bottleneck_bound} to write 
\begin{equation}\label{eq:bottleneck_expander}
    \Delta \leq \frac{\sqrt{\tr(\rho_G P_{\partial\mathcal{V}})}}{\tr(\rho_G P_\mathcal{V})} = Z^{1/2} \frac{\sqrt{\tr(e^{-\beta H} P_{\partial\mathcal{V}})}}{\tr(e^{-\beta H} P_\mathcal{V})},
\end{equation}
where $Z = \tr(e^{-\beta H})$ is the partition function. The denominator is easy to bound:
\begin{equation}\label{eq:weight_in_V}
    \tr(e^{-\beta H} P_\mathcal{V}) \geq \max_{\ket{\psi}\in\mathcal{V}} \langle\psi|e^{-\beta H}|\psi\rangle \geq e^{-\beta \min_{\ket{\psi}\in\mathcal{V}}(\langle \psi|H|\psi\rangle)} \geq e^{-\beta (\epsilon + g) n},
\end{equation}
where we used the convexity of the exponential function and the fact that the perturbation $V$ changes the energy of any state by at most $gn$. 

The more difficult part is upper bounding $\tr(e^{-\beta H} P_{\partial\mathcal{V}})$. The problem here is that, even though we know that the \emph{average} energy of any state in $\partial \mathcal{V}$ is large (i.e., larger than $(\epsilon + \kappa - g)n$, thanks to \autoref{eq:def_lin_barrier}), this in itself does not give any useful bound on the expectation value of $e^{-\beta H}$ in this state, which could be dominated by contributions from the small components of the wavefunction on eigenstates with energies much smaller than the average. Luckily, we can adapt a proof from Ref. \cite{yin2024eigenstate} to bound these tails.

\begin{lemma}\label{lem:tail_bound}
    Let $\ket{\psi}$ be an eigenstate of $H = H_0 + V$ with energy $E < \epsilon_1 n$ and let $Q_>$ be a projector onto the subspace $\hilbert_>$, spanned by eigenstates of $H_0$ with energies $\geq \epsilon_2 n$ with $\epsilon_2 > \epsilon_1 + 4g$. Let each qubit be part of at most $w_0$ terms in $H_0$ and each term in $V$ act on at most $w_1$ qubits. Then 
    \begin{equation}
        ||Q_> \ket{\psi}|| \leq e^{-\lambda(g) n}, \quad \text{ with } \lambda(g) \geq \frac{\epsilon_2 -\epsilon_1}{2w_0w_1} \ln\left(\frac{\epsilon_2-\epsilon_1}{2g}\right).
    \end{equation}
\end{lemma}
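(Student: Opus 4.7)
The plan is to adapt a perturbative cascade argument: peel off successive high-energy shells in the $H_0$-eigenbasis using the eigenstate equation together with a resolvent bound, picking up a small factor at each step that compounds to exponential decay in $n$. The essential physical input is that $V$ is \emph{quasi-local in $H_0$-energy}, meaning each local term $V_j$ shifts the $H_0$-energy of any $H_0$-eigenstate by at most $w := w_0 w_1$. This holds because the mutually commuting $\Pi_k$ admit a joint eigenbasis $\{\ket{\vec\pi}\}$ with $H_0\ket{\vec\pi}=(\sum_k \pi_k)\ket{\vec\pi}$; $V_j$ commutes with every $\Pi_k$ whose support is disjoint from $\mathrm{supp}(V_j)$, and $V_j$ overlaps at most $w_0 w_1$ of the projectors, so $\bra{\vec\pi'}V_j\ket{\vec\pi}$ vanishes unless $\vec\pi,\vec\pi'$ differ on at most $w_0 w_1$ entries, bounding their $H_0$-energy difference by $w$.

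Next I would introduce energy shells. Let $P_{\geq m}$ project onto $H_0$-eigenstates with $H_0 \geq E + m w$. Quasi-locality implies $P_{\geq m} V P_{< m-1} = 0$, so applying $P_{\geq m}$ to the eigenstate equation $(H_0 - E)\ket{\psi} = -V\ket{\psi}$ yields $(H_0 - E)\, P_{\geq m}\ket{\psi} = -P_{\geq m} V P_{\geq m-1}\ket{\psi}$. On the range of $P_{\geq m}$ the operator $H_0 - E$ is bounded below by $mw>0$ and hence invertible with $\|(H_0-E)^{-1}P_{\geq m}\|\leq 1/(mw)$. Combined with $\|V\|\leq gn$ this gives the one-step recursion $\|P_{\geq m}\ket{\psi}\| \leq \frac{gn}{mw}\,\|P_{\geq m-1}\ket{\psi}\|$, which iterates from $\|P_{\geq 0}\ket{\psi}\|\leq 1$ to $\|P_{\geq M}\ket{\psi}\| \leq (gn/w)^M/M!$. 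Stirling's inequality $M!\geq (M/e)^M$ then gives $\|P_{\geq M}\ket{\psi}\|\leq (egn/(wM))^M$.

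To conclude, I choose $M = \lfloor(\epsilon_2 n - E)/w\rfloor$, so that the range of $Q_>$ is contained in that of $P_{\geq M}$ and therefore $\|Q_>\ket{\psi}\|\leq \|P_{\geq M}\ket{\psi}\|$. The assumption $E<\epsilon_1 n$ gives $M\geq (\epsilon_2-\epsilon_1)n/w - 1$, producing a decay rate of roughly $\frac{\epsilon_2-\epsilon_1}{w}\ln\bigl(\frac{\epsilon_2-\epsilon_1}{eg}\bigr)$ in the exponent; the gap hypothesis $\epsilon_2-\epsilon_1 > 4g > \frac{e^2}{2}g$ is exactly what lets this be relaxed to the stated cleaner form $\frac{\epsilon_2-\epsilon_1}{2w_0 w_1}\ln\bigl(\frac{\epsilon_2-\epsilon_1}{2g}\bigr)$ while absorbing the $O(1)$ integer-part and Stirling corrections. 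The main obstacle is the quasi-locality-in-energy step: without the mutually-commuting structure of $H_0$ one has only the trivial estimate $\|[V,H_0]\|=O(gn)$, which gives a useless $O(1)$ bound at each step; the commuting-projector structure is precisely what turns the naive resolvent bound into the factorial suppression that exponentiates in $n$.
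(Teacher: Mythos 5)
Your proof is correct and takes a genuinely different route from the paper's. Both arguments hinge on the same key structural fact (which you establish carefully): that $V$ is quasi-local in $H_0$-energy, so that in the joint eigenbasis of the commuting $\Pi_k$ a single local term $V_j$ changes $H_0$ by at most $w=w_0w_1$, hence $P_{\geq m}VP_{<m-1}=0$. Where you diverge is in how you exploit this. The paper decomposes the Hilbert space into a buffer $\hilbert_<$ (ending roughly at the midpoint $\tfrac{\epsilon_1+\epsilon_2}{2}n$), a finite number of shells $\hilbert_q$ of uniform width $\Delta E$, and the target $\hilbert_>$; it then writes the eigenvalue equation in block-tridiagonal form, extracts a recursion for the shell amplitudes $c_q$ by pairing with $\bra{\psi_q}$, and has to establish monotonicity $c_{q+1}\leq c_q$ inductively (with base case $c_>\leq c_{q_*}$) in order to close the recursion; the per-shell contraction is then bounded uniformly by its worst (lowest-shell) value $\approx 2g/(\epsilon_2-\epsilon_1)$, which is why the shells must begin past the midpoint so that this ratio is below $1$. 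Your version instead works with the \emph{nested} projectors $P_{\geq m}$ rather than disjoint shells: applying $P_{\geq m}$ to $(H_0-E)\ket{\psi}=-V\ket{\psi}$, using $P_{\geq m}VP_{<m-1}=0$, and inverting $H_0-E$ on the range of $P_{\geq m}$ gives the unconditional one-step bound $\|P_{\geq m}\ket{\psi}\|\leq \tfrac{gn}{mw}\|P_{\geq m-1}\ket{\psi}\|$ with no auxiliary induction hypothesis, and the $m$-dependent resolvent bound produces the factorial $(gn/w)^M/M!$ automatically rather than a geometric product — the fact that the first few factors exceed $1$ is harmless because the factorial eventually dominates, so you can anchor the iteration at $m=0$ instead of at a carefully chosen midpoint. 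After Stirling and the choice $M=\lfloor(\epsilon_2 n-E)/w\rfloor$, your rate is $\tfrac{\epsilon_2-\epsilon_1}{w}\ln\bigl(\tfrac{\epsilon_2-\epsilon_1}{eg}\bigr)$, which you correctly observe dominates the stated $\tfrac{\epsilon_2-\epsilon_1}{2w}\ln\bigl(\tfrac{\epsilon_2-\epsilon_1}{2g}\bigr)$ precisely when $\epsilon_2-\epsilon_1\geq \tfrac{e^2}{2}g$, i.e.\ under the lemma's hypothesis $\epsilon_2-\epsilon_1>4g$. On balance your route is tighter and structurally simpler: it dispenses with the tridiagonal bookkeeping, the buffer region, and the monotonicity induction, at the cost of invoking a resolvent estimate rather than the purely elementary amplitude manipulation in the paper. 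The only thing to be explicit about, which you already flag, is that the $O(1)$ slack from the floor and from Stirling must be absorbed into the strict inequality between your rate and the stated one, so the bound as written requires $n$ large; this is an implicit assumption in the paper's proof as well.
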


\begin{proof}
    Let $\ket{\psi_0}$ be an eigenstate of $H_0$ with energy $E_0$. Such an eigenstate is characterized by some pattern of excitations, i.e. the set of projectors $\Pi_k$ that have eigenvalue $1$ in $\ket{\psi_0}$. Write $V = \sum_l V_l$ where the terms $V_l$ act on at most $w_1$ qubits. Since each qubit is part of at most $w_0$ terms in $H_0$, therefore the state $V\ket{\psi_0}$ is contained within the subspace spanned by the eigenstates of $H_0$ with energies in the interval $[E_0-w_0w_1,E_0+w_0w_1]$. The perturbation $V$ thus only connects eigenstates of $H_0$ with nearby energy.
        
 We now divide the Hilbert space into a sum of orthogonal subspaces, 
    \begin{equation}\label{eq:split_hilbert}
    \hilbert = \hilbert_> \oplus \bigoplus_{q=1}^{q_*}\hilbert_q \oplus \hilbert_<,
    \end{equation}
     where the different subspaces are defined as follows
    \begin{itemize}
    	\item $\hilbert_>$ is spanned by eigenstates of $H_0$ with eigenvalues $E_0 \geq \epsilon_2 n$
	\item $\hilbert_<$ is spanned by eigenstates of $H_0$ with eigenvalues $ E_0 < \left(\frac{\epsilon_2 - \epsilon_1}{2} +2g\right) n$.
	\item  $\hilbert_q$ is spanned by eigenstates of $H_0$ with eigenvalues $E_0 \in [E(q),E(q)+\Delta E)$, where $\Delta E > w_0w_1$ is an $O(1)$ constant and $E(q) = E(1) + (q-1) \Delta E$, with $E(1) = \frac{\epsilon_2 + \epsilon_1}{2} n + 2gn$ and $E(q_*) + \Delta E = \epsilon_2 n$. Note that $q_* \Delta E = \left(\frac{\epsilon_2-\epsilon_1}{2} - 2g\right)n$.
    \end{itemize}
     In this decomposition, $H_0$ is block-diagonal and $V$ is block-tri-diagonal, only connecting neighboring subspaces, so we can write
     \begin{align}\label{eq:tridiag}
     	H_0 =
     	\begin{pmatrix}
	 	h_> & 0 & 0 & 0 & \ldots \\
		0 & h_{q_*} & 0 & 0 & \ldots  \\
            \vdots & \vdots & \ddots & \vdots & \vdots \\
            \ldots & 0 & 0 & h_1 & 0 \\
            \ldots & 0 & 0 & 0 & h_< 
	\end{pmatrix}; & & 
	V = 
     	\begin{pmatrix}
	 	v_> & v_{>,q_*} & 0 & 0 & \ldots \\
		v_{>,q_*}^\dagger & v_{q_*} & v_{q_*,q_*-1} & 0 &  \ldots \\
            \vdots & \vdots &  \ddots & \vdots & \vdots  \\
            \ldots & 0 & v_{2,1}^\dagger & v_1 & v_{1,<} \\
            \ldots & 0 & 0 & v_{1,<}^\dagger & v_< 
	\end{pmatrix}.
     \end{align}  

    Now consider an eigenstate $\ket{\psi}$ of $H = H_0 +V$ with energy $E$, which we decompose according to Eq.~\eqref{eq:split_hilbert} as 
    \begin{equation}
    \ket{\psi} = c_> \ket{\psi_>} + \sum_q c_q \ket{\psi_q}  + c_< \ket{\psi_<},
    \end{equation}
    where the states $\ket{\psi_q}$, $\ket{\psi_>}$ and $\ket{\psi_<}$ are normalized and the coefficients $c_q$, $c_>$ and $c_<$ can be chosen to be real and non-negative. Using Eq.~\eqref{eq:tridiag}, the eigenvalue equation implies 
    \begin{equation}
    	E c_> \ket{\psi_>} = c_> (h_> + v_>) \ket{\psi_>} + c_{q_*} v_{>,q_*} \ket{\psi_{q*}}.
    \end{equation}
    Multiplying by $\bra{\psi_{>}}$ and using $||V|| \leq g n$, along with the definition of $\hilbert_>$, we find
    \begin{equation}
    c_{q_*} gn \geq-c_{q_*} \langle \psi_> | v_{>,q_*} | \psi_> \rangle = c_> \left(  \langle \psi_> | h_> | \psi_> \rangle + \langle \psi_> | v_> | \psi_> \rangle - E \right) \geq c_> (\epsilon_2 n - gn - E).
    \end{equation}
    Rearranging this we get that $c_>$ is upper bounded by $c_{q_*}$ as
    \begin{equation}
    c_> \leq \frac{gn}{\epsilon_2 n - gn - E} c_{q_*} \leq c_{q_*}.
    \end{equation}
    
    We can now iterate this bound to get an increasingly strong upper bound on $c_>$. In particular, for any $1 \leq q \leq q_*$ the eigenvalue equation yields (identifying $\hilbert_>$ with $\hilbert_{q_*+1}$ and $\hilbert_<$ with $\hilbert_0$):
    \begin{equation}\label{eq:eigval_eq}
        E c_q \ket{\psi_q} = c_q (h_q + v_q) \ket{\psi_q} + c_{q+1} v_{q+1,q}^\dagger \ket{\psi_{q+1}} + c_{q-1} v_{q,q-1}^\dagger \ket{\psi_{q-1}}
    \end{equation}
    Let us now assume that $c_{q+1} \leq c_q$. Then multiplying Eq.~\eqref{eq:eigval_eq} by $\bra{\psi_q}$ and rearranging we get the following inequality:
    \begin{equation}\label{eq:coeff_ineq}
        c_q \leq \frac{g n} {(E(q)- gn - E) - c_{q+1} g n} c_{q-1} \leq \frac{g n} {E(q)- 2gn - E} c_{q-1} \leq c_{q-1}.
    \end{equation}
    We can thus iterate this bound all the way to $q=1$ to get:
    \begin{equation}
        c_> \leq \prod_{q=1}^{q_*} \frac{gn}{E(q) - 2gn - E} c_< \leq  \left(  \frac{2gn}{E(1) - 2gn - E} \right)^{q_*} = \left(  \frac{2g}{\epsilon_2 - \epsilon_1} \right)^{q_*} \equiv e^{-\lambda(g) n},
    \end{equation}
    where we used $E = \varepsilon_1 n$ and defined 
    \begin{equation}
    	\lambda(g) = \frac{\epsilon_2-\epsilon_1 - 4g}{2\Delta E} \ln\left(\frac{\epsilon_2-\epsilon_1}{2g}\right).
    \end{equation}
    Finally, we can choose $\Delta E \leq w_0 w_1 \frac{\epsilon_2-\epsilon_1}{\epsilon_2-\epsilon_1 - 4g}$ which finishes the proof. 
\end{proof}

With Lemma~\ref{lem:tail_bound} in hand, we can now bound the numerator in Eq.~\eqref{eq:bottleneck_expander} as follows. 

\begin{theorem}[Stability of states with extensive energy barriers]\label{thm:stability}
    Let $\mathcal{V}$ satisfying Eq.~\eqref{eq:def_lin_barrier} with $\epsilon = E_{0,\min}(\mathcal{V}) / n < \kappa/2$. Then there exists $\beta_* < \frac{\kappa-2\epsilon}{\ln{2}}$ and $g_*(\beta) > 0$ such that for $\beta > \beta_*$ and $g < g_*(\beta)$, the perturbed Hamiltonian $H_0 + V$ has bottleneck ratio
    \begin{equation}
        \Delta = \frac{||P_{\partial\mathcal{V}}\rho_G||_1}{\tr(P_\mathcal{V}\rho )} = e^{-\Omega(n)}.
    \end{equation}
\end{theorem}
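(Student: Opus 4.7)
The starting point is the bound already established in \autoref{eq:bottleneck_expander}, which comes from Lemma~\ref{lem:bottleneck_bound}:
\[
\Delta \le Z^{1/2}\,\frac{\sqrt{\tr(e^{-\beta H} P_{\partial\mathcal{V}})}}{\tr(e^{-\beta H} P_\mathcal{V})}.
\]
The denominator is controlled by \autoref{eq:weight_in_V}, and the partition function admits the trivial upper bound $Z \le 2^n e^{\beta g n}$ (since $H_0\ge 0$ and $\|V\|\le gn$ imply $H \ge -gn$). The whole proof therefore reduces to controlling the numerator $\tr(e^{-\beta H} P_{\partial\mathcal{V}})$.

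My plan is to work in the eigenbasis of $H$, $\{|\psi_\alpha\rangle\}$ with energies $E_\alpha$, and split the trace at an energy-density threshold $\epsilon_1$ to be chosen in the window $\bigl(2\epsilon + O(g + 1/\beta),\; \epsilon + \kappa - 4g\bigr)$. The crucial point is that the extensive barrier hypothesis \autoref{eq:def_lin_barrier} implies $P_{\partial\mathcal{V}} \le Q_>$, where $Q_>$ denotes the projector onto $H_0$-eigenstates with energy at least $(\epsilon + \kappa) n$. This lets one feed Lemma~\ref{lem:tail_bound} (applied with $\epsilon_2 = \epsilon + \kappa$) to every $H$-eigenstate $|\psi_\alpha\rangle$ with $E_\alpha < \epsilon_1 n$, giving
\[
\langle \psi_\alpha | P_{\partial\mathcal{V}} | \psi_\alpha \rangle \le \|Q_> |\psi_\alpha\rangle\|^2 \le e^{-2\lambda(g) n},
\]
where $\lambda(g) \sim \ln(1/g)$ diverges as $g \to 0$. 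The complementary high-energy eigenstates contribute at most $2^n e^{-\beta \epsilon_1 n}$ via Boltzmann suppression and the trivial counting bound, so
\[
\tr(e^{-\beta H}P_{\partial\mathcal{V}}) \le Z\, e^{-2\lambda(g) n} + 2^n e^{-\beta \epsilon_1 n}.
\]

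Substituting back yields a bound of the schematic form
\[
\Delta^2 \lesssim 4^n\bigl( e^{-2\lambda(g) n + 2\beta(\epsilon + 2g)n} + e^{-\beta(\epsilon_1 - 2\epsilon - 3g)n}\bigr),
\]
which is exponentially small provided both: (i) $\lambda(g) > \ln 2 + \beta(\epsilon + 2g)$, which can be arranged by taking $g$ below an explicit $g_*(\beta)$ because $\lambda(g)\to\infty$ as $g\to 0$; and (ii) the window for $\epsilon_1$ is nonempty, which translates into $\beta$ exceeding an explicit finite threshold $\beta_*$, the finiteness relying on the theorem's assumption $\epsilon < \kappa/2$ (in particular $\kappa > \epsilon$) so that the threshold $\epsilon_1$ can be placed below $\epsilon + \kappa - 4g$ while still beating $2\epsilon + O(g + 1/\beta)$.

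The main obstacle has already been isolated into Lemma~\ref{lem:tail_bound}, which converts the spectral barrier of $H_0$ into a quantitative suppression of the low-$H$-energy eigenstates on the high-$H_0$-energy sector. Without that ingredient, the barrier structure of $H_0$ would be washed out by $V$ in any naive Boltzmann-weight estimate. Given the lemma, the remaining steps amount to a careful accounting of three competing exponentials---the dimension factor $2^n$, the tail factor $e^{-2\lambda(g) n}$, and the Boltzmann factor $e^{-\beta \epsilon_1 n}$---with the threshold $\epsilon_1$ being the tuning parameter that makes them all fit.
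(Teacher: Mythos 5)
Your proposal follows essentially the same route as the paper's proof: bound the bottleneck ratio via \autoref{eq:bottleneck_expander}, control the denominator with \autoref{eq:weight_in_V}, and control the numerator by splitting in the $H$-eigenbasis at an energy-density threshold, applying Lemma~\ref{lem:tail_bound} (with $\epsilon_2 = \epsilon+\kappa$) to the low-$H$-energy eigenstates and Boltzmann suppression to the high-energy ones. The only variation is that you leave the threshold $\epsilon_1$ as a free tuning parameter, whereas the paper fixes $\epsilon_1 = \epsilon + \kappa/2$; your version is slightly cleaner bookkeeping and, as the paper's own closing remark already observes, would directly relax the hypothesis from $\epsilon < \kappa/2$ to $\epsilon < \kappa$.
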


\begin{proof}

We have already bounded the denominator of the bottleneck ratio in Eq.~\eqref{eq:weight_in_V}. We will now use Lemma~\ref{lem:tail_bound} to bound the numerator, assuming $g \leq \kappa / 8$. To do so, consider
a state $\ket{\varphi} \in \mathcal{\partial V}$ which is guaranteed to be in the subspace $\hilbert_{>}$ relative to eigenstates $\ket{i}$ of $H$ with energy $E_i \leq (\epsilon + \kappa/2)n$. We can thus bound the expectation value of $e^{-\beta H}$ in this case by expanding it in the eigenbasis and splitting the sum into its low- and high-energy parts:
\begin{align}\label{eq:split_sum}
    \langle \varphi |e^{-\beta H}|\varphi\rangle &= \sum_{i: E_i \leq (\epsilon+\kappa /2)n} e^{-\beta E_i} |\langle i |\varphi\rangle|^2 + \sum_{i: E_i > (\epsilon+\kappa /2)n} e^{-\beta E_i} |\langle i |\varphi\rangle|^2  \nonumber \\
    &\leq \sum_{i: E_i \leq (\epsilon+\kappa /2)n} e^{-\beta E_i} e^{-\lambda_\kappa(g)n} + \sum_{i: E_i > (\epsilon+\kappa /2)n} e^{-\beta (\epsilon+\kappa/2)n} |\langle i |\varphi\rangle|^2 \nonumber \\
    &\leq 2^n e^{-\lambda_\kappa(g) n} + e^{-\beta (\epsilon+\kappa/2)n}, 
\end{align}
where $\lambda_\kappa(g) \geq \frac{\kappa}{4w_0w_1}\ln\left(\frac{\kappa}{4g}\right)$, with $w_0,w_1$ being the constants defined in Lemma~\ref{lem:tail_bound}.

To bound $\tr(P_{\partial\mathcal{V}}e^{-\beta H})$ we can apply Eq.~\eqref{eq:split_sum} for each term, which yields an additional factor of $2^n$.  The partition function is also upper bounded by $Z \leq 2^n$. Combining all these and using Eq.~\eqref{eq:bottleneck_expander} we get an upper bound on the bottleneck ratio
\begin{align}
    \Delta^2 \leq (8^n e^{-\lambda_\kappa(g) n} + 4^n e^{-\beta (\epsilon+\kappa/2)n}) e^{2\beta (\epsilon + g) n} = e^{(3\ln{2}-\lambda_\kappa(g)+2\beta (\epsilon + g))n} + e^{(2\ln{2} + \beta (\epsilon + 2g - \kappa/2))n}.
\end{align}
We can ensure that the right hand side is exponentially small in $n$ provided that the following two conditions are met:
\begin{align}
    \beta\left(\frac{\kappa}{2} - \epsilon - 2g\right) &> 2\ln{2}, \\
    \lambda_\kappa(g) - 2\beta(\epsilon + g) &> 3\ln{2}.
\end{align}
For $\frac{\kappa-2\epsilon}{\ln{2}} < \beta < \infty$ we can always choose $g$ small enough to ensure that both of these conditions are satisfied. For the first, we need $4g < \kappa - 2\epsilon - \frac{1}{\beta} \ln{2}$. For the second, it is sufficient to satisfy
\begin{equation}
    \frac{\kappa}{4w_0w_1} \ln\left(\frac{\kappa}{4g}\right) > 3\ln{2} + \beta(\kappa/2 + \epsilon),
\end{equation}
which is satisfied by taking 
\begin{equation}\label{eq:g_bound}
    g < \frac{\kappa}{4} e^{-4w_0w_1 (\beta(\kappa/2+\epsilon) + 3\ln{2}) / \kappa}.
\end{equation}

\end{proof}

\begin{remark}
    A few closing remarks are in order about Thm.~\ref{thm:stability}. First, note that the restriction on $\epsilon < \kappa/2$ could easily be relaxed to $\epsilon < \kappa$ by appropriately adjusting the various constant appearing in the proof. Thus the condition is that the minimal energy density in $\partial\mathcal{V}$ is at least twice as that of $\mathcal{V}$. This condition was brought about because of the appearance of the square root in the numerator of Eq.~\eqref{eq:bottleneck_expander}, which in turn corresponds to the worst case scenario, where the matrix elements of $\rho_G$ that are off-block-diagonal with respect to the decomposition $\hilbert = \partial\mathcal{V} \oplus \partial\mathcal{V}^\perp$ are the same size as the matrix elements within the block $\partial\mathcal{V}$. In fact, $H_0$ is block-diagonal with respect to this decomposition so the off-diagonal contributions go to zero in the limit $g\to 0$. We thus expect that the proof can be strengthened and possibly the restriction on $\epsilon$ removed completely. 

    Similarly, the bound Eq.~\eqref{eq:g_bound} on $g$ gets weaker with decreasing temperature, and goes to zero in the zero-temperature limit $\beta\to\infty$.  This contrast with physical intuition that suggests that stability should be enhanced at lower temperatures, and we indeed expect that it should be possible to keep $g$ finite in the zero temperature limit without invalidating the conclusions of the theorem.  
    
\end{remark}

\bibliography{references_suppl.bib}